\newtheorem{theorem}{Theorem} 
\newtheorem{proposition}[theorem]{Proposition}
\newtheorem{lemma}[theorem]{Lemma}
\newtheorem{remark}[theorem]{Remark}
\newtheorem{corollary}[theorem]{Corollary}
\newtheorem{definition}[theorem]{Definition}
\newtheorem{example}[theorem]{Example}
\newcommand{\ba}{\begin{align}}
\newcommand{\ea}{\end{align}}  
\newcommand{\be}{\begin{equation}}
\newcommand{\ee}{\end{equation}}
\newcommand{\bea}{\begin{eqnarray}}
\newcommand{\eea}{\end{eqnarray}}
\newcommand{\barr}{\begin{array}}
\newcommand{\earr}{\end{array}}
\newcommand{\bn}{\begin{enumerate}}
\newcommand{\en}{\end{enumerate}}
\newcommand{\bi}{\begin{itemize}}
\newcommand{\ei}{\end{itemize}}
\newcommand{\bbbm}{\begin{pmatrix}}
\newcommand{\eeem}{\end{pmatrix}}
\newcommand{\bbS}{{\bf S}}
\newcommand{\cE}{{\cal E}}
\newcommand{\cP}{{\cal P}}
\newcommand{\R}{{\mathbf R}}
\newcommand{\al}{\alpha}
\newcommand{\bt}{\beta}
\newcommand{\ga}{\gamma}
\newcommand{\De}{\Delta}
\newcommand{\ep}{\epsilon}
\newcommand{\la}{\lambda}
\newcommand{\ignore}[1]{}{}
\newcommand{\noin}{\noindent}
\newcommand{\nn}{\nonumber}
\newcommand{\p}{{\partial}}
\newcommand{\q}{\quad}
 \newcommand{\Id}{\mathop{\rm Id}}
\newcommand{\Prob}{{\mathcal P}}
\newcommand{{\QED}}{{\hfill QED} \bigskip}
\newcommand{\Rn}{{\R^n}}
\newcommand{\spt}{\mathop{\rm spt}}
\newcommand{\Tr}{\mathop{\rm Tr}}
\newcommand{\Wa}{W_\alpha}
\newcommand{\Wb}{{W_\beta}}
\newcommand{\Wab}{W_{\alpha,\beta}}
\renewcommand{\subset}{\subseteq}
\newcommand{\cal}{\mathcal}
 \DeclareMathOperator*{\argmin}{argmin}
  \DeclareMathOperator*{\argmax}{argmax}
  \definecolor{darkspringgreen}{rgb}{0.09, 0.45, 0.27} 
 \definecolor{darkgray}{rgb}{0.66, 0.66, 0.66}
\newcommand{\alo}{\underline{\alpha}_{\Delta^1}}
\newcommand{\alod}{\alo^+}
\newcommand{\aln}{\underline{\alpha}_{\Delta^n}}
\newcommand{\alnd}{\aln^+}
\newcommand{\ali}{\al_\infty^*}
\newcommand{\bti}{\bt_\infty^*}
\newcommand{\fin}{f_\infty^*}
\newcommand{\kn}{4^*}
\numberwithin{equation}{section}
\numberwithin{theorem}{section}
\tikzset{
    partial ellipse/.style args={#1:#2:#3}{
        insert path={+ (#1:#3) arc (#1:#2:#3)}
    }
}
\begin{document}
\title[Classifying minimizers of attractive-repulsive interactions]
{Classifying minimum energy states for interacting particles: regular simplices
 }
\thanks{
\em \copyright 2022 by the authors.  The authors are grateful to Dejan Slepcev, Rupert Frank and {Steven Damelin} for stimulating interactions.
 CD acknowledges partial support of his research by a Natural Sciences and Engineering Research Council of Canada Undergraduate Summer Research Assistantship. 
 TL is grateful for the support of ShanghaiTech University, and in addition, to the University of Toronto and its Fields Institute for the Mathematical
Sciences, where parts of this work were performed.  RM  acknowledges partial support of his research by the Canada Research Chairs Program and
Natural Sciences and Engineering Research Council of Canada Grant {2020-04162.}}

\date{\today}

\author{Cameron Davies, Tongseok Lim and Robert J. McCann}
\address{Cameron Davies: Department of Mathematics \newline University of Toronto, Toronto ON Canada}
\email{cameron.davies@mail.utoronto.ca}
\address{Tongseok Lim: Krannert School of Management \newline  Purdue University, West Lafayette, Indiana 47907, USA}
\email{lim336@purdue.edu}
\address{Robert J. McCann: Department of Mathematics \newline University of Toronto, Toronto ON Canada}
\email{mccann@math.toronto.edu}

\begin{abstract} 
Densities of particles on $\Rn$ which interact pairwise through an attractive-repulsive power-law potential $W_{\al,\bt}(x) = |x|^\al/\al-|x|^\bt/\bt$ have often been used to explain patterns produced by biological and physical systems.
 In the mildly repulsive regime 
 {$\al> \bt \ge 2$ with $n \ge 2$, we 
show there exists 
a decreasing homeomorphism $\al_{\Delta^n}$ from $[2,4]$ to itself such that:} 
 distributing the particles uniformly over the vertices of a regular unit diameter $n$-simplex minimizes the potential energy if and only if $\al\ge \al_{\De^n}(\bt)$. Moreover this minimum is uniquely attained up to rigid motions when $\al > \al_{\De^n}(\bt)$. We estimate $\al_{\De^n}(\bt)$ above and below, and identify its limit as the dimension grows large. These results are derived from a new northeast comparison principle in the space of exponents. At the endpoint $(\al,\bt)=(4,2)$ of this transition curve, we characterize all minimizers by showing they lie on a sphere and share all first and second moments with the spherical shell. Suitably modified versions of these statements are also established {(i) for $W_{\alpha,\beta}$ and corresponding energies in the case where $n=1$,} and {(ii)} for the  attractive-repulsive potentials $D_\al(x) = |x|^\al(\al\log |x|-1)$ that arise in the  
 limit $\bt \nearrow \al$.
\end{abstract}


\maketitle
\noindent\emph{Keywords:  attractive-repulsive power-law potential, pattern formation, interaction energy, simplex, unique minimum, symmetry breaking, mild repulsion,  aggregation dynamics,   infinite-dimensional quadratic program, $L^\infty$-Kantorovich-Rubinstein-Wasserstein, $d_\infty$-local
}

\noindent\emph{MSC2020 Classification: Primary 49Q10. Secondary 31B10, 35Q70, 37L30, 70F45, 90C20}

\section{Introduction}
Particles interacting through long-range attraction and short-range repulsion given by differences of power-laws have been used to model a range of
 physical  \cite{Lennard-Jones31}  \cite{HolmPutkaradze06}  and 
 biological  \cite{Patlak53} \cite{Breder54} \cite{KellerSegel70} systems,  to predict or explain many of the patterns they display 
 \cite{AlbiBalagueCarrilloVonBrecht14}
\cite{BertozziKolokolnikovSunUminskyVonBrecht15}
\cite{KolokolnikovSunUminskyBertozzi11}
\cite{vonBrechtUminskyKolokolnikovBertozzi12},
{and to select mesh points for numerical integration \cite{Damelin08, DamelinGrabner03, DamelinLevesleyRagozinSun09, DamelinMaymeskul05}.}
For very few values of the attractive and repulsive exponents $(\al,\bt)$ are the energy minimizing configurations of particles explicitly known; see however
 \cite{BurchardChoksiHess-Childs20} \cite{CarrilloHuang17} \cite{CarrilloShu21+} \cite{ChoksiFetecauTopaloglu15}
\cite{DaviesLimMcCann21+a} \cite{FetecauHuang13} \cite{FetecauHuangKolokolnikov11} {\cite{FrankLieb19+}} \cite{LimMcCann21}.
 Here we complement these results which --- apart from \cite{LimMcCann21} --- concern $\bt < 2$, 
 by showing that for a region containing the intersection of the 
 {quadrant $(\al,\bt) \in [4,\infty) \times [2,\infty) \setminus \{(4,2)\}$
 with the halfspace} $\al>\bt$,   the minimizer consists precisely of those configurations which equidistribute their particles over the vertices of an appropriately sized simplex,
 i.e. an equilateral triangle in two dimensions and a regular tetrahedron in three.  We are able to give a detailed description the region in question,  and explain precisely how uniqueness of these minimizers fails at its corner $(\al,\bt)=(4,2)$.

Let us recall the setting and notation from our companion work \cite{DaviesLimMcCann21+a}:
 The self-interaction energy of a collection of particles with mass distribution $d\mu(x) \ge 0$ on $\R^n$ is given by 
\begin{align} \label{energy}
\mathcal{E}_W(\mu) =  \frac12 \iint_{\R^n \times \R^n} W(x-y) d\mu(x)d\mu(y),
\end{align}
assuming the particles interact with each other through a pair potential $W(x)$.   Normalizing the distribution to have unit mass ensures that $\mu$ belongs to the space $\Prob(\R^n)$ of Borel
probability measures on $\R^n$. 

Our goal is to identify global energy minimizers
of $\cE_W(\mu)$ on $\Prob(\Rn)$,  for {\em power-law} potentials $W=\Wab$ where
\begin{align} \label{potential}
\Wa(x) &:= |x|^\al/\al  \q {\rm and}
\\ \Wab(x) &:= \Wa(x) - \Wb(x) 
\q  \alpha > \beta>-n,
\label{potential2}
\end{align}
with the appropriate convention if $\alpha=0$ or $\beta=0$ \cite{BCLR13}.
In this paper we focus {exclusively} on the mildly repulsive regime $\bt \ge 2$ of \cite{CarrilloFigalliPatacchini17}, and its frontier $\bt =2$.
The latter is called the centrifugal line in \cite{LimMcCann21}, since, at least on $\R^2$, the potential $-W_2$ induces the outward force which particles rotating uniformly around their common center of mass seem to experience in a corotating reference frame; see e.g.~\cite{McCann06}. On this frontier the energy also acts as a Lyapunov function of the rescaled dynamics of the purely attractive Patlak-Keller-Segel  \cite{Patlak53} \cite{KellerSegel70} model in self-similar variables around the time of blow-up \cite{SunUminskyBertozzi12}.
On the segment $(\al,\bt) \in (2,4) \times \{2\}$,  our companion paper shows the minimizer is uniquely given (up to translations) by a spherical shell --- i.e. the uniform probability measure on a spherical hypersurface --- at least if $n \ge 2$. 

For $\al \ge 4$ and $\al > \bt \ge 2$ but $(\al , \bt) \neq (4,2)$, the present work shows that the minimizer is uniquely given (apart from rotations and translations) by the measure $ \nu=\nu_1$ which equidistributes its mass over the vertices of a regular, unit diameter, $n$-simplex, defined below, i.e. an equilateral 
triangle if $n=2$ and a regular tetrahedron if $n=3$. These results answer a question of Sun, Uminsky and Bertozzi, by showing that the linear stability of selfsimilar blow-up 
which they found 
for the aggregation dynamics 
on the boundary of these two complementary regimes 
can be improved to a nonlinear stability result.
 This improvement is explained in \cite{DaviesLimMcCann21+a}; for spherically symmetric perturbations of the spherical shell,  such an improvement was already found by Balagu\'e et al \cite{BalagueCarrilloLaurentRaoul13N},
while asymptotic stability of measures on the simplex vertices was addressed by Simione \cite{Simione14}.
On the other hand, at the threshold exponent separating these two regimes, 
we will show that although all centered convex combinations of the configurations mentioned above remain mimimizers, 
there are many additional minimizers as well: indeed for $(\al,\bt)=(4,2)$ the centered minimizers consist precisely of all measures supported on the minimizing spherical shell which share its moments up to order 2. When $n \ge 2$, this case is distinguished from $\al \ne 4$ by the fact that the attractor formed by global energy minimizers becomes infinite-dimensional.

In the mildly repulsive region $\al > \bt \ge 2$,  two of us recently showed the existence of a finite
threshold $\al_{\Delta^n}(\beta)<\infty$ above which the energy is uniquely minimized by 
$ \nu_{1}$ and its rotates and translates \cite{LimMcCann21}.  In the current manuscript,  we estimate
$\al_{\Delta^n}(\beta) \le \max\{\bt,4\}$ concretely, showing equality holds when $\bt=2\le n$ and finding the high dimensional limiting threshold explicitly in the broader range $\bt > 2$.
We also show it is impossible for $ \nu_1$  to minimize 
$ \mathcal{E}_{W_{\al,\bt}}$ for any $\al < \al_{\Delta^n}(\beta)$. Further results concerning $\al_{\Delta^n}$ are established in \S \ref{S:threshold} below and summarized 
in {Theorem \ref{T:threshold} and Remark \ref{R:threshold}}. 

 To describe our conclusions, it will be convenient to recall the following class of sets and measures which were the main object of study in \cite{LimMcCann19p} \cite{LimMcCann21}. We say that a set $K \subset \R^n$ is called a {\rm regular $ k$-simplex} if it is the convex hull of $ k+1$ points $\{x_0, x_1,...,x_{k}\}$ in $\R^n$ satisfying $|x_i-x_j|=d$ for some $d>0$ and all $0 \le i < j \le {k}$. 
 The points $\{x_0, x_1,...,x_{k}\}$ are called {\em vertices} of the simplex. In particular, it is called a {\em unit $k$-simplex} if $d=1$. 
We also define the following set of measures:
\begin{align}\label{simplex}
\cP_{\Delta^n}:=\{\nu \in \cP(\R^n) \ | \ &\nu \text{ is uniformly distributed over }  \\
&\text{the vertices of a unit $n$-simplex.} \} \nn
\end{align}
In particular $\cP_{\Delta^1} = \{\frac12(\delta_{a} + \delta_{a+1}) \ | \ a \in \R\}$.
 Let $\cP^0_{\Delta^n}= \cP_{\Delta^n} \cap \cP_0(\Rn)$ where
 $\mathcal{P}_0(\R^n)$ denotes the centered probability measures on $\R^n$ --- meaning those having finite first moments and satisfying 
\be \label{centered}
\int_{\R^n} x \, d\mu(x)=0.
\ee  
We can now present our results. Let $\Id$ denote the $n\times n$ identity matrix.
\begin{theorem}[Characterizing energy minimizers at $(\al,\bt)=(4,2)$]
\label{main1}
A measure 
$\mu \in \cP_0(\R^n)$ minimizes $\mathcal{E}_{W_{4,2}}$ in \eqref{energy} if and only if $\mu$ is concentrated on the centered sphere of radius $ \sqrt{\frac{n}{2n+2}}$ 
 and has
\be\label{tensor}
\int x \otimes x \, d\mu(x) = \bigg( \int x_ix_j d\mu(x)\bigg)_{1 \le i,j \le n}= \frac{1}{2n+2} {\rm Id}. 
\ee
\end{theorem}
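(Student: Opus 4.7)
My plan is to reduce the infinite-dimensional problem to a one-variable optimization by expanding both power-law kernels and exploiting the centering hypothesis to cancel all odd cross-terms. Introducing the scalars $M_1 := \int |x|^2\,d\mu(x)$ and $M_2 := \int |x|^4\,d\mu(x)$ together with the matrix $A := \int x\otimes x\,d\mu(x)$, I expect the energy to depend on $\mu$ only through $M_1$, $M_2$, and $\rtr(A^2)$; two sharp elementary inequalities should then pin down the optimum along with its equality cases.

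Carrying this out, I would expand $|x-y|^4 = (|x|^2+|y|^2-2x\cdot y)^2$ and $|x-y|^2 = |x|^2+|y|^2-2x\cdot y$, integrate against $d\mu(x)\,d\mu(y)$, use $\int x\,d\mu = 0$ to annihilate every term linear in one of the integration variables, and note that $\iint (x\cdot y)^2\,d\mu(x)d\mu(y) = \|A\|_F^2 = \rtr(A^2)$ by symmetry of $A$. This gives
\[
\iint |x-y|^2\,d\mu\,d\mu = 2M_1, \qquad \iint |x-y|^4\,d\mu\,d\mu = 2M_2 + 2M_1^2 + 4\,\rtr(A^2),
\]
so that
\[
\mathcal{E}_{W_{4,2}}(\mu) = \frac{M_2 + M_1^2}{4} + \frac{\rtr(A^2)}{2} - \frac{M_1}{2}.
\]
Jensen's inequality applied to $|x|^2$ yields $M_2 \ge M_1^2$, with equality iff $\mu$ is concentrated on a centered sphere; Cauchy--Schwarz on the non-negative eigenvalues of the positive semidefinite matrix $A$ yields $\rtr(A^2) \ge (\rtr A)^2/n = M_1^2/n$, with equality iff $A$ is a scalar multiple of $\Id$. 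Combining these gives
\[
\mathcal{E}_{W_{4,2}}(\mu) \ge \frac{(n+1)M_1^2}{2n} - \frac{M_1}{2},
\]
a convex quadratic in $M_1 \ge 0$ whose minimum occurs at $M_1^\ast = \tfrac{n}{2n+2}$.

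Finally I would read off the equality cases. A centered minimizer must attain $M_1 = M_1^\ast$ and saturate both inequalities above, i.e.\ concentrate on the sphere of radius $\sqrt{M_1^\ast} = \sqrt{n/(2n+2)}$ and satisfy $A = (M_1^\ast/n)\Id = \tfrac{1}{2n+2}\Id$, which is exactly \eqref{tensor}; conversely, any such $\mu$ achieves the lower bound and is therefore a minimizer. Existence is automatic since the uniform probability on that sphere (and the simplex measure $\nu_1$) qualify, and finiteness of $M_2$ at a minimizer follows from coercivity of $W_{4,2}$ via comparison with a bounded-support trial measure. The proof presents no serious obstacle beyond the algebraic bookkeeping in the expansion; the noteworthy structural feature is that at the corner $(\alpha,\beta)=(4,2)$ the energy sees $\mu$ only through the first and second moment data $M_1$ and $A$, which is what allows the minimizing set to be infinite-dimensional.
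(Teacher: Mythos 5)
Your proof is correct, and the algebra matches the paper's, but you reach the key structural fact about the second-moment tensor by a genuinely more elementary route. Both arguments expand $|x-y|^4$ and $|x-y|^2$, use centering to annihilate the odd cross-terms, and express the energy through $M_1=\Tr I(\mu)$, $M_2=\int|x|^4\,d\mu$ and $\Tr(I(\mu)^2)$; your identity $\mathcal{E}_{W_{4,2}}(\mu)=\tfrac{M_2+M_1^2}{4}+\tfrac{\Tr(I(\mu)^2)}{2}-\tfrac{M_1}{2}$ is exactly the paper's $F$--$G$ decomposition. The divergence is in how one concludes that every minimizer has $I(\mu)=\lambda\Id$: the paper first proves (Lemma~\ref{linearity}) that the energy is convex on $\cP_0(\R^n)$ and affine along a segment iff its endpoints share a second-moment tensor, so the minimizing set is convex and rotation-invariant, hence contains a spherically symmetric measure, forcing the common tensor to be scalar. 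You bypass that symmetrization entirely with Cauchy--Schwarz on the non-negative eigenvalues of the positive semidefinite matrix $I(\mu)$, namely $\Tr(I(\mu)^2)\ge(\Tr I(\mu))^2/n$ with equality iff $I(\mu)$ is a multiple of $\Id$, and your Jensen bound $M_2\ge M_1^2$ (equality iff $\mu$ concentrates on a centered sphere) replaces the paper's Lemma~\ref{moments}. The paper's convexity lemma is a reusable structural fact and yields the affinity characterization as a statement of independent interest; your two-inequality argument is shorter, entirely self-contained, and delivers the equality cases directly from the elementary inequalities. Your closing observation about finiteness of $M_2$ is also needed to justify the moment expansion at a candidate minimizer, and is implicit in the hypotheses of the paper's Lemma~\ref{linearity}, so it is good that you flagged it.
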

Notice, if $n=1$, $\frac{\delta_{-1/2}+\delta_{1/2}}{2}\in \mathcal{P}_{\Delta^1}$ is the only minimizer in $ \cP_0(\R)$. 
For $n=3$, several inequivalent minimizers are illustrated in Figure \ref{fig:1}.

\begin{figure}[h]
\caption{Convex hulls of supports of sample minimizers of $\mathcal{E}_{W_{4,2}}$ in $\mathcal{P}_0(\R^3)$. Each of these four minimizers is inscribed in the sphere of radius $\sqrt{3/8}$ and has mass uniformly distributed over the set of extreme points of the convex hull of its support. Moreover, rotates and convex combinations of any of these minimizers are also minimizers. This implies that general minimizers of $\mathcal{E}_{W_{4,2}}$ need not have any rotational symmetries.}
\

\centering
\begin{tikzpicture}[x  = {(-0.5cm,-0.5cm)},
                    y  = {(0.9659cm,-0.25882cm)},
                    z  = {(0cm,1cm)},
                    scale = 2]   
\draw[fill=red!50!white,opacity=0.5] 
( 0.57735026919, 0,-0.20412414523) --
( -0.28867513459, -0.5,-0.20412414523) --
(-0.28867513459, 0.5,-0.20412414523)--cycle;
\draw[fill=red!50!white,opacity=0.5] (0, 0,0.61237243569) --
( -0.28867513459, -0.5,-0.20412414523) --
(-0.28867513459, 0.5, -0.20412414523)--cycle;
\draw[fill=red!50!white,opacity=0.5] ( 0, 0, 0.61237243569) --
( 0.57735026919, 0, -0.20412414523) --
( -0.28867513459, 0.5, -0.20412414523)--cycle;
\draw[fill=red!50!white,opacity=0.5] ( 0, 0, 0.61237243569) --
( 0.57735026919, 0, -0.20412414523) --
( -0.28867513459, -0.5, -0.20412414523) --cycle;
\end{tikzpicture}  
\
\begin{tikzpicture}[scale = 2]   
\draw[fill=green!50!white,opacity=0.5] 
(-0.61237243569, 0,0) --
( 0, -0.61237243569, 0) --
(0,0, -0.61237243569)--cycle;
\draw[fill=green!50!white,opacity=0.5] 
(0.61237243569, 0,0) --
( 0, -0.61237243569, 0) --
(0,0, -0.61237243569)--cycle;
\draw[fill=green!50!white,opacity=0.5] 
(-0.61237243569, 0,0) --
( 0, 0.61237243569, 0) --
(0,0, -0.61237243569)--cycle;
\draw[fill=green!50!white,opacity=0.5] 
(0.61237243569, 0,0) --
( 0, 0.61237243569, 0) --
(0,0, -0.61237243569)--cycle;
\draw[fill=green!50!white,opacity=0.5] 
(-0.61237243569, 0,0) --
( 0, 0.61237243569, 0) --
(0,0, 0.61237243569)--cycle;
\draw[fill=green!50!white,opacity=0.5] 
(0.61237243569, 0,0) --
( 0, 0.61237243569, 0) --
(0,0, 0.61237243569)--cycle;
\draw[fill=green!50!white,opacity=0.5] 
(0.61237243569, 0,0) --
( 0, -0.61237243569, 0) --
(0,0, 0.61237243569)--cycle;
\draw[fill=green!50!white,opacity=0.5] 
(-0.61237243569, 0,0) --
( 0, -0.61237243569, 0) --
(0,0, 0.61237243569)--cycle;
\end{tikzpicture}
\ \    
\begin{tikzpicture}[x  = {(-0.5cm,-0.5cm)},
                    y  = {(0.9659cm,-0.25882cm)},
                    z  = {(0cm,1cm)},
                    scale = 2]   
\draw[fill=blue!50!white,opacity=0.5] 
( -0.35355339059, -0.35355339059, -0.35355339059) --
( 0.35355339059, -0.35355339059,-0.35355339059) --
( 0.35355339059, 0.35355339059,-0.35355339059)--
( -0.35355339059, 0.35355339059,-0.35355339059)--cycle;
\draw[fill=blue!50!white,opacity=0.5] 
( -0.35355339059, -0.35355339059, -0.35355339059) --
( 0.35355339059, -0.35355339059,-0.35355339059) --
( 0.35355339059, -0.35355339059, 0.35355339059)--
( -0.35355339059, -0.35355339059, 0.35355339059)--cycle;
\draw[fill=blue!50!white,opacity=0.5] 
( -0.35355339059, -0.35355339059, -0.35355339059) --
( -0.35355339059, -0.35355339059,0.35355339059) --
( -0.35355339059, 0.35355339059,0.35355339059)--
( -0.35355339059, 0.35355339059,-0.35355339059)--cycle;
\draw[fill=blue!50!white,opacity=0.5] 
( 0.35355339059, 0.35355339059,0.35355339059) --
( -0.35355339059, 0.35355339059,0.35355339059) --
( -0.35355339059, -0.35355339059,0.35355339059)--
( 0.35355339059, -0.35355339059,0.35355339059)--cycle;
\draw[fill=blue!50!white,opacity=0.5] 
( 0.35355339059, 0.35355339059,0.35355339059) --
( -0.35355339059, 0.35355339059,0.35355339059) --
( -0.35355339059, 0.35355339059, -0.35355339059)--
( 0.35355339059, 0.35355339059, -0.35355339059)--cycle;
\draw[fill=blue!50!white,opacity=0.5] 
( 0.35355339059, 0.35355339059,0.35355339059) --
( 0.35355339059, 0.35355339059,-0.35355339059) --
( 0.35355339059, -0.35355339059,-0.35355339059)--
( 0.35355339059, -0.35355339059,0.35355339059)--cycle;
\end{tikzpicture} 
\ \ 
\begin{tikzpicture}[scale=2]
\draw[thin] (0,0) circle (0.61237243569cm);
\draw[opacity=0.4] (0,0) [partial ellipse =0:180:0.61237243569cm and .2cm];
\fill[violet!75!white, opacity=0.5](0,0) circle (0.61237243569cm);
\draw(0,0) [partial ellipse =180:360:0.61237243569cm and .2cm];
\end{tikzpicture}
\label{fig:1}
\end{figure}

Now for each $\al>\bt$, let 
\[A_{\al,\bt} = \{(\al',\bt') \in \R^2 
\ | \ \al'>\bt',\ \al' \ge \al,\ \bt' \ge \bt,\ (\al',\bt') \ne (\al,\bt)\}
\]
denote the region of parameters lying north, east, or northeast of $(\al,\bt)$. {The following theorem allows us to extend an energy comparison involving a unit simplex from a single point $(\alpha,\beta)$ in parameter space to the entire {\em northeast} region $A_{\alpha,\beta}$ which lies above and to its right. As we learned from the referees, when $n=2$ and the interaction energy \eqref{energy} is equipped with the one-parameter family of anisotropic potentials $\Tilde{W}_{\alpha}(x):=-\log(|x|)+\alpha\frac{x_1^2}{|x|^2},$ 
an analogous comparison principle was formulated independently by
Carrillo et al. \cite{CarrilloMateuMoraRondiScardiaVerdera20}, who used it to show that the known unique minimizer of $\mathcal{E}_{\Tilde{W}_{1}}$ also uniquely minimizes $\mathcal{E}_{\Tilde{W}_{\alpha}}$ for each $\alpha\ge 1$.  In effect,  the theorem which follows provides two-parameter monotonicity results for power-law potentials analogous to their one-parameter result for anisotropic potentials.
}
\begin{theorem}[Northeast comparison of simplex energies and potentials]
\label{main2} 
Let $\al > \bt > 0$. If $ \nu \in \cP_{\Delta^n}$ minimizes $ \mathcal{E}_{W_{\al,\bt}}$ on $\cP(\Rn)$, then for $(\al',\bt') \in A_{\al,\bt}$, 
\be\label{twin argmin identities}
\cP_{\De^n} = \argmin_{\cP(\Rn)} \cE_{W_{\al',\bt'}} \quad {\rm and} \quad \spt \nu = \argmin_\Rn (\nu *W_{\al',\bt'}).
 \ee
\end{theorem}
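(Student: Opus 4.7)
The plan is to reduce both identities to a single comparison on the exponent axis. Since $\spt\nu=\{x_0,\ldots,x_n\}$ has all off-diagonal pairwise distances equal to $1$, $V_{\al,\bt}:=\nu\ast W_{\al,\bt}$ is automatically constant on $\spt\nu$; direct computation (using $W_{\al,\bt}(1)=\tfrac{1}{\al}-\tfrac{1}{\bt}$) gives
\[
V_{\al,\bt}(z)-V_{\al,\bt}(x_0)\;=\;\frac{1}{n+1}\bigl[\Phi_z(\al)-\Phi_z(\bt)\bigr],\quad \Phi_z(t):=\frac{1}{t}\Bigl(\sum_{j=0}^{n}|z-x_j|^t-n\Bigr),
\]
and, setting $M_t(\mu):=\iint|x-y|^t\,d\mu(x)d\mu(y)$ and using $M_t(\nu)=\tfrac{n}{n+1}$ for every $\nu\in\cP_{\De^n}$,
\[
\cE_{W_{\al,\bt}}(\mu)-\cE_{W_{\al,\bt}}(\nu)\;=\;\tfrac12\bigl[\Phi_\mu(\al)-\Phi_\mu(\bt)\bigr],\quad \Phi_\mu(t):=\frac{M_t(\mu)-\tfrac{n}{n+1}}{t}.
\]
The hypothesis that $\nu$ minimizes reads $\Phi_z(\al)\ge\Phi_z(\bt)$ for all $z\in\R^n$ (equivalently, $\Phi_\mu(\al)\ge\Phi_\mu(\bt)$ for all probability $\mu$), while the two identities in \eqref{twin argmin identities} require strict inequality at the new exponents for $z\notin\spt\nu$ and for $\mu\notin\cP_{\De^n}$, respectively.

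\textbf{Strict quasi-convexity.} Every such $\Phi$ has the form $F(t)/t$ where $F$ is a sum or integral of convex exponentials $t\mapsto r^t$ minus a constant. I would introduce $L(t):=F(t)-tF'(t)$, note that $L'(t)=-tF''(t)\le 0$ is strictly negative whenever $F$ is strictly convex (that is, unless all nontrivial distances equal $1$), and observe $\Phi'(t)=-L(t)/t^2$. Hence $L$ is strictly decreasing in the nondegenerate case, and an endpoint analysis of $L(0^+)$ and $L(\infty)$ distinguishes three regimes: (i) $L$ has a unique zero $t^*>0$, in which case $\Phi$ is strictly U-shaped with unique minimum at $t^*$; (ii) $L<0$ throughout, in which case $\Phi$ is strictly increasing on $(0,\infty)$; or (iii) $F$ is identically constant in $t$ (the degenerate clique case).

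\textbf{Northeast comparison.} Given $\Phi(\al)\ge\Phi(\bt)$ with $\al>\bt$, in regime (i) necessarily $\al\ge t^*$ (else both $\al$ and $\bt$ lie on the strictly decreasing branch, forcing $\Phi(\al)<\Phi(\bt)$). For any $(\al',\bt')\in A_{\al,\bt}$ with $(\al',\bt')\ne(\al,\bt)$: if $\bt'\ge t^*$ then $\al',\bt'$ lie on the strictly increasing branch and $\Phi(\al')>\Phi(\bt')$ directly; otherwise
\[
\Phi(\al')\;\ge\;\Phi(\al)\;\ge\;\Phi(\bt)\;\ge\;\Phi(\bt'),
\]
with $\al'>\al$ or $\bt'>\bt$ triggering strict monotonicity on the relevant branch and making at least one inequality strict. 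Regime (ii) yields $\Phi(\al')>\Phi(\bt')$ immediately. Applied to $\Phi=\Phi_z$ for each $z\notin\spt\nu$---which always lies in regime (i), since $L_z(0^+)=1>0$ for non-vertex $z$ and since the circumradius $\sqrt{n/(2n+2)}<1$ of the unit $n$-simplex precludes any $z$ at distance $1$ from every vertex---this yields $V_{\al',\bt'}(z)>V_{\al',\bt'}(x_0)$, establishing the second identity in \eqref{twin argmin identities}.

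\textbf{Uniqueness and main obstacle.} For the first identity, rigid-motion invariance of $W_{\al',\bt'}$ gives $\cP_{\De^n}\subseteq\argmin\cE_{W_{\al',\bt'}}$. Conversely, any minimizer $\mu$ satisfies $\Phi_\mu(\al')=\Phi_\mu(\bt')$, which by regimes (i) and (ii) is impossible; hence $\Phi_\mu$ lies in regime (iii), i.e., $F_\mu$ is constant in $t$. This is equivalent to $|x-y|=1$ for $\mu\otimes\mu$-a.e.\ pair $x\ne y$, so $\spt\mu$ is a unit-distance clique in $\R^n$ and contains at most $n+1$ points. The equality $\Phi_\mu\equiv 0$ (from $\al'\ne\bt'$) forces $\sum_x\mu(\{x\})^2=\tfrac{1}{n+1}$, and Cauchy--Schwarz on at most $n+1$ atoms saturates precisely when $\mu$ is uniform over exactly $n+1$ vertices, so $\mu\in\cP_{\De^n}$. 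The main technical obstacle is the endpoint classification of $L(t)$ in the second step: one must verify the correct sign of $L(0^+)$ and $L(\infty)$ in every geometric configuration---ruling out anomalous intermediate cases where $z$ is equidistant from many vertices---so that $\Phi_z$ and $\Phi_\mu$ fall cleanly into exactly one of the three regimes; once this classification is secured, the remainder of the argument consists of the one-dimensional quasi-convex chain above and the Cauchy--Schwarz rigidity.
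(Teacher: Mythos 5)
Your reduction is correct and captures the phenomenon, but it is a genuinely different route from the paper's. Where the paper renormalizes the potential to $\overline{w}_{\al,\bt}(r) = \bt r^\al - \al r^\bt)/(\al-\bt)$ (pinned at $-1$ when $r=1$ and $0$ when $r=0$) and proves the pointwise monotonicity $\al\,\p_\bt\overline{w}_{\al,\bt}(r)\ge0$ (Lemma~3.4), you keep the unnormalized energy/potential gaps, write them as $\Phi(\al)-\Phi(\bt)$ with $\Phi(t)=F(t)/t$, and exploit the strict quasi-convexity of $\Phi$ coming from the observation that $L:=F-tF'$ has $L'=-tF''\le 0$. Your algebraic identities (the expressions for $V_{\al,\bt}(z)-V_{\al,\bt}(x_0)$ and $\cE(\mu)-\cE(\nu)$ via $\Phi_z,\Phi_\mu$ and $M_t(\nu)=\tfrac n{n+1}$), the quasi-convexity mechanism, and the Cauchy--Schwarz rigidity at the end are all sound.

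However, the endpoint classification you flag as the ``main technical obstacle'' is indeed incomplete as written, and your proposed geometric fix does not close it. Your circumradius argument only excludes $z$ at distance $1$ from \emph{all} $n+1$ vertices; it says nothing about $z$ at distance exactly $1$ from $n$ of them and strictly less than $1$ from the remaining one, which would give $L_z(\infty)=0$ and hence $L_z>0$ on $(0,\infty)$, so that $\Phi_z$ is strictly \emph{decreasing} and falls outside all three of your regimes. (This case does not actually occur --- a point at unit distance from $n$ vertices is either $x_0$ or its reflection through their affine hull, and the reflection lies at distance $2\sqrt{(n+1)/(2n)}>1$ from $x_0$ --- but that requires a separate lemma you have not supplied.) You also never analyze $L_\mu$ at its endpoints for general $\mu$, where $L_\mu(0^+)=\tfrac1{n+1}-\sum_x\mu(\{x\})^2$ can be negative and $L_\mu(\infty)$ is sensitive to the mass your $\mu\otimes\mu$ places at distance exactly $1$. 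Fortunately, you are working harder than necessary: the regime where $L>0$ on all of $(0,\infty)$ makes $\Phi$ strictly decreasing, so $\Phi(\al)<\Phi(\bt)$, which directly contradicts your hypothesis that $\nu$ minimizes at $(\al,\bt)$; no geometry is needed at all. Once you observe that the minimality hypothesis forecloses that regime for every $z$ and every $\mu$, the endpoint taxonomy can be dropped entirely, and your quasi-convexity chain goes through cleanly. By contrast, the paper's normalization makes the simplex contribution literally constant in $(\al,\bt)$ and every other contribution pointwise nondecreasing, so the comparison is manifest without any case analysis; that is what the rescaling buys.
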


\begin{remark}[One dimension]\label{R:1D}
If $n=1$, our companion paper \cite{DaviesLimMcCann21+a} shows $\cP_{\Delta^1}$ uniquely minimizes $ \mathcal{E}_{W_{\al,2}}$ for all $\al \ge 3$. Kang, Kim, Lim and Seo \cite[Theorem 2]{KKLS21} on the other hand showed $\cP_{\Delta^1}$ is not a $d_\infty$-local minimizer, hence not a global minimizer, in the range $\bt=2<\al<3$. 
\end{remark}
Set 
\be\label{4*}
\kn:=
\left\{
\begin{array}{cl} 
3 & {\rm if}\ n=1 \\
4 & {\rm otherwise}.
\end{array}
\right.
\ee
Notice Theorems \ref{main1}, \ref{main2} and Remark \ref{R:1D}  imply the following corollary; see also Figure \ref{fig:2}.

\begin{corollary}[Simplices minimize for $\al \ge \max\{\kn ,\bt\}$]
\label{C:main3}
 For each $(\al,\bt) \in A_{\kn,2}$, $\cP_{\Delta^n}$ uniquely minimizes $ \mathcal{E}_{W_{\al,\bt}}$ on $\cP(\R^n)$.
\end{corollary}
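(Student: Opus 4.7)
The plan is to use Theorem \ref{main2} to propagate from a single ``seed'' exponent pair northeast through $A_{\kn,2}$. In each dimension the seed will be the corner $(\al,\bt)=(\kn,2)$: for $n=1$ this seed is supplied by Remark \ref{R:1D}, and for $n\ge 2$ by Theorem \ref{main1}. Once I exhibit some $\nu \in \cP_{\De^n}$ that minimizes $\cE_{W_{\kn,2}}$ on $\cP(\R^n)$, the second identity in \eqref{twin argmin identities} of Theorem \ref{main2}, applied at $(\al,\bt)=(\kn,2)$, immediately yields $\cP_{\De^n} = \argmin_{\cP(\R^n)} \cE_{W_{\al',\bt'}}$ for every $(\al',\bt') \in A_{\kn,2}$, which is exactly the claim.

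For $n=1$ the seed is given to me: Remark \ref{R:1D} states outright that $\cP_{\De^1}$ is the unique minimizer of $\cE_{W_{3,2}}$ on $\cP(\R)$, so any $\nu \in \cP_{\De^1}$ qualifies. For $n\ge 2$ I need to verify that every centered $\nu \in \cP^0_{\De^n}$ meets the hypotheses of Theorem \ref{main1}. This reduces to two short observations about a regular unit $n$-simplex: its circumradius equals $\sqrt{n/(2n+2)}$, which places $\spt\nu$ on the correct sphere; and the transitive $S_{n+1}$-action on its vertex set forces $\int x\otimes x\, d\nu$ to be a scalar multiple of $\Id$, whose scalar value is pinned down to $1/(2n+2)$ by matching the trace against $|x|^2 = n/(2n+2)$ on $\spt\nu$. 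Identity \eqref{tensor} then holds, Theorem \ref{main1} identifies $\nu$ as a minimizer on $\cP_0(\R^n)$, and translation invariance of $\cE_{W_{4,2}}$ promotes every element of $\cP_{\De^n}$ to a minimizer on all of $\cP(\R^n)$.

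With the seed established in both cases, Theorem \ref{main2} finishes the proof; no further analysis is required, since the northeast propagation is precisely what that theorem supplies. There is no real obstacle beyond the elementary verification that a centered simplex vertex-measure sits inside the moment class described by Theorem \ref{main1}. One subtlety worth flagging is that $(\kn,2)$ is deliberately excluded from $A_{\kn,2}$: this exclusion is essential when $n\ge 2$, because Theorem \ref{main1} itself exhibits the infinite-dimensional family of additional minimizers at $(4,2)$ which prevents uniqueness at that corner; when $n=1$ the exclusion is cosmetic, as Remark \ref{R:1D} already gives uniqueness at the seed.
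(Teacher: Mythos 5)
Your argument is correct and follows the same route as the paper: use Remark \ref{R:1D} (for $n=1$) or Theorem \ref{main1} together with the simplex-moment computation (the paper's Remark \ref{R:simplex moments}, for $n\ge 2$) to certify a seed minimizer at $(\kn,2)$, then invoke Theorem \ref{main2} to propagate northeast. One small wording slip to fix: you write that the \emph{second} identity in \eqref{twin argmin identities} delivers $\cP_{\De^n}=\argmin\cE_{W_{\al',\bt'}}$, but that is the \emph{first} identity; also, the step that the second moment tensor is a scalar multiple of $\Id$ really rests on irreducibility of the $S_{n+1}$-representation on $\R^n$ (not merely transitivity on the vertices), though the conclusion is of course correct and the paper's Remark \ref{R:simplex moments} just computes it directly.
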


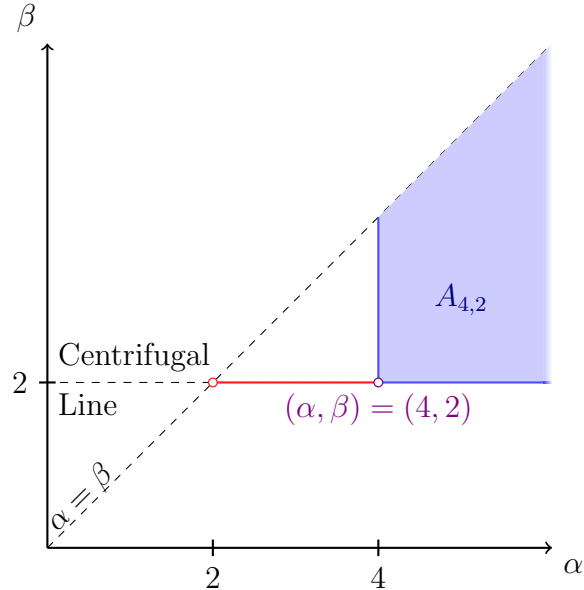
\begin{figure}[h]

\caption{Partial phase diagram of the mildly repulsive region $\al>\bt \ge 2$ for $n \ge 2$:  on the red segment linking $(2,2)$ to
$(4,2)$,  energy is uniquely minimized by a spherical shell \cite{DaviesLimMcCann21+a}.   At $(\al,\bt)=(4,2)$,  the energy is minimized by any convex combination of the {configurations described in Figure 1, but also admits other minimizers {characterized by Theorem \ref{main1}}. In the blue region, $A_{4,2}$,  Theorem \ref{main2} and the fact that the balanced unit simplices $\cP_{\Delta^n}$ minimize $\mathcal{E}_{W_{4,2}}$ combine to imply that the interaction energy} is minimized  {precisely by the elements of $\cP_{\Delta^n}$.}}

\centering

\begin{tikzpicture}[scale=1.1]
\draw[thick,->] (0,0) -- (6.1,0)node[anchor=north west] {$\alpha$};
\draw[thick,->] (0,0) -- (0,6.1)node[anchor=south east] {$\beta$};
\draw [dashed] (0,0) -- (6.1,6.1);
\draw [dashed] (6,2) -- (0,2) node[anchor=south west] {{Centrifugal}}node[anchor=north west] {Line};
\fill[blue!20!white] (4,2) -- (4,4) -- (6,6) -- (6,2) -- (4,2);
\draw[thick, red!90!white] (4,2) -- (2,2); 
\draw[thick, blue!70!white] (4,2) -- (4,4);
\draw[thick, blue!70!white] (4,2) -- (6,2);
\fill[violet] (4,2) circle (0.06cm) node[anchor=north] {$(\alpha,\beta)=(4,2)$};
\fill[white] (4,2) circle (0.045cm);
\fill[red!90!white] (2,2) circle (0.06cm);
\fill[white] (2,2) circle (0.045cm);
\node[blue!50!black] at (5,3){$A_{4,2}$};
\draw[thick] (-0.1,2) node[anchor=east] {2}
-- (0.1,2)  ;
\draw[thick] (2,-0.1) node[anchor=north] {2}
-- (2,0.1)  ;
\draw[thick] (4,-0.1) node[anchor=north] {4}
-- (4,0.1)  ;
\node[black, rotate=45] at (0.4,0.6) {$\alpha=\beta$};
\shade[left color=blue!20!white,right color=white] (6,2) -- (6,6) -- (6.1,6.1) -- (6.1,2) -- (6,2);
\shade[left color=blue!70!white,right color=white] (6,1.9905) rectangle (6.1,2.0095);
\end{tikzpicture}
\label{fig:2}
\end{figure}

{Our theorems, and in particular Theorem \ref{main2}, allow us to infer quantitative results about the structure of threshold function $\al_{\Delta^n}(\beta)$ which two of us defined implicitly in \cite[Corollary 1.4]{LimMcCann21}. For any given $\beta\ge 2,$ this threshold function describes the critical value $\alpha_{\Delta^n}(\beta)$ such that, for all $\alpha>\alpha_{\Delta^n}(\beta),$ $\mathcal{E}_{W_{\alpha,\beta}}$ is uniquely minimized by the unit simplices $\mathcal{P}_{\Delta^n}$. Prior to the present work and its companion paper \cite{DaviesLimMcCann21+a}, nothing was known of the behaviour of $\alpha_{\Delta^n}$, save for its abstract existence as a function from $[2,\infty)$ to $[2,\infty)$ and the lower bound $\alpha_{\Delta^n}(2) \ge 4$ provided by \cite[Remark 1.5]{LimMcCann21}. Our techniques now yield following much more precise statement, which implies continuity and monotonicity properties of the threshold function, and shows that if $\alpha\in (\beta,\alpha_{\Delta^n}(\beta)),$ then $\mathcal{E}_{W_{\alpha,\beta}}$ is not minimized by any unit simplex:}

{\begin{theorem}[Transition threshold]\label{T:threshold}
For each $ \bt \ge 2$ there exists $\al_{\Delta^n}(\bt) \in [\bt, \infty)$ such that  
\begin{align}
 \label{gthreshold} 
\cP_{\Delta^n}
=\argmin_{\cP(\Rn)} \mathcal{E}_{W_{\al,\bt}} 
&\mbox\ {\rm if}\ \al > \al_{\Delta^n}(\bt),
\\ \label{lthreshold}
\emptyset = \cP_{\Delta^n}
\cap \argmin_{\cP(\Rn)} \mathcal{E}_{W_{\al,\bt}} 
&\mbox\ {\rm if}\ \bt < \al < \al_{\Delta^n}(\bt). 
\end{align}
If $\al=\al_{\De^n}(\bt)$ and $\nu \in \cP_{\De^n}$, then at least 
one of the following two containments is strict:
\be \label{ethreshold}
\cP_{\Delta^n}
\subsetneq \argmin_{\cP(\Rn)} \mathcal{E}_{W_{\al,\bt}}
\quad {\rm or} \quad
\spt \nu \subsetneq \argmin_{\Rn} (W_{\al,\bt} * \nu).
\ee
Moreover, $\al_{\De^n}(2) =\kn $ from \eqref{4*}, and we have $\beta_n \in (2,\kn )$ such that $\al_{\Delta^n}(\bt) = \bt$ for $\bt \ge \bt_n$, and 
$\al_{\Delta^n}:[2,\bt_n] \longrightarrow [\bt_n, \kn]$ is continuous and strictly decreasing. 
\end{theorem}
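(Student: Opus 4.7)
The plan is to combine the northeast monotonicity of Theorem \ref{main2} with a stability principle for the strict minimization region, plus weak-$*$ closedness of the minimization locus, to pin down $\al_{\Delta^n}$. I define
\[
\al_{\Delta^n}(\bt) := \inf\bigl\{\al > \bt : \cP_{\Delta^n} \subset \argmin\nolimits_{\cP(\Rn)} \cE_{W_{\al, \bt}}\bigr\},
\]
so that $\bt \le \al_{\Delta^n}(\bt) \le \max\{\kn, \bt\}$ by Corollary \ref{C:main3}. Then \eqref{gthreshold} is immediate: for $\al > \al_{\Delta^n}(\bt)$, choose $\al'$ in the infimum's defining set with $\al' < \al$ and apply Theorem \ref{main2} to propagate minimization northeast from $(\al', \bt)$ to $(\al, \bt)$. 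The disjointness in \eqref{lthreshold} follows from rigid-motion invariance of $\cE_{W_{\al, \bt}}$: a single $\nu \in \cP_{\Delta^n}$ realizing the minimum would force every rotate and translate to do so, so $\cP_{\Delta^n} \cap \argmin \ne \emptyset$ upgrades to $\cP_{\Delta^n} \subset \argmin$, which by definition of the infimum is impossible for $\bt < \al < \al_{\Delta^n}(\bt)$.

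The technical heart is a stability principle: if $\al_0 > \bt_0 \ge 2$ and both containments $\cP_{\Delta^n} = \argmin_{\cP(\Rn)} \cE_{W_{\al_0, \bt_0}}$ and $\spt \nu = \argmin_{\Rn}(W_{\al_0, \bt_0} * \nu)$ hold as equalities at $(\al_0, \bt_0)$, then both persist throughout an open neighborhood of $(\al_0, \bt_0)$ in $\{\al > \bt \ge 2\}$, provided the $d_\infty$-local uniqueness from \cite{LimMcCann21} is available there (i.e.\ $\bt_0 > 2$, or $\bt_0 = 2$ and $\al_0 > \kn$). I would establish this in four steps: (i) continuity of $(\al, \bt) \mapsto W_{\al, \bt} * \nu$ makes the uniform gap $\inf_{x \notin \spt \nu}(W * \nu)(x) - \min_{\Rn}(W * \nu) > 0$ persist under small perturbation of parameters; (ii) along any sequence $(\al_k, \bt_k) \to (\al_0, \bt_0)$ with minimizers $\mu_k \notin \cP_{\Delta^n}$, weak-$*$ compactness (tightness from coercivity of $W_{\al, \bt}$) and continuity of the energy force a subsequential limit $\mu_k \rightharpoonup \mu^* \in \argmin \cE_{W_{\al_0, \bt_0}} = \cP_{\Delta^n}$; (iii) the Euler--Lagrange inclusion \eqref{EL}, combined with the persistent gap from (i), confines $\spt \mu_k$ into arbitrarily small neighborhoods of $\spt \mu^*$, so $d_\infty(\mu_k, \cP_{\Delta^n}) \to 0$; (iv) the $d_\infty$-local uniqueness from \cite{LimMcCann21} then forces $\mu_k \in \cP_{\Delta^n}$ eventually, a contradiction. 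Applied at $\al_0 = \al_{\Delta^n}(\bt_0)$, the stability principle would push the threshold strictly below itself, establishing \eqref{ethreshold}.

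For the remaining assertions: monotonicity (nonincreasing) of $\al_{\Delta^n}$ in $\bt$ is immediate from Theorem \ref{main2}. The identity $\al_{\Delta^n}(2) = \kn$ combines Corollary \ref{C:main3} (upper bound) with the companion paper \cite{DaviesLimMcCann21+a} when $n \ge 2$ (spherical shells being the unique minimizers for $2 < \al < 4$ excludes $\cP_{\Delta^n}$) and Remark \ref{R:1D} when $n = 1$. Existence of $\bt_n \in (2, \kn)$ with $\al_{\Delta^n}(\bt) = \bt$ for $\bt \ge \bt_n$ is obtained by setting $\bt_n := \sup\{\bt \ge 2 : \al_{\Delta^n}(\bt) > \bt\}$; the continuity established below together with $\al_{\Delta^n}(2) = \kn > 2$ forces $\bt_n > 2$, while the dimension-independent upper bound $\al_{\Delta^n}(\bt) \le \ali(\bt)$ of Proposition \ref{upperbound} forces $\bt_n < \kn$. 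Continuity on $[2, \bt_n]$ follows because closedness of the minimization locus (from weak-$*$ lsc and parameter continuity of the energy) and openness of its strict interior (from the stability principle) together preclude jump discontinuities of the nonincreasing function $\al_{\Delta^n}$. Finally, given $2 \le \bt_1 < \bt_2 \le \bt_n$, closedness yields $\cP_{\Delta^n} \subset \argmin$ at $(\al_{\Delta^n}(\bt_1), \bt_1)$, and the strict inequalities in the proof of Theorem \ref{main2} upgrade this to both strict equalities at $(\al_{\Delta^n}(\bt_1), \bt_2)$; the stability principle then supplies $\delta > 0$ with $\al_{\Delta^n}(\bt_2) \le \al_{\Delta^n}(\bt_1) - \delta$, giving strict monotonicity.

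I expect the main obstacle to be the stability principle, and within it the quantitative $d_\infty$-concentration in step (iii), which is what allows activation of \cite{LimMcCann21}. A secondary subtlety arises near the centrifugal line $\bt = 2$, where the $d_\infty$-local uniqueness is unavailable for $\al \le \kn$; fortunately the ``at least one strict'' statement at the endpoint $(\al, \bt) = (\kn, 2)$ is delivered directly by Theorem \ref{main1} when $n \ge 2$ (the infinite-dimensional family of minimizers witnesses $\cP_{\Delta^n} \subsetneq \argmin$) and by an explicit calculation showing $0 \in \argmin(W_{3, 2} * \nu) \setminus \spt \nu$ for $\nu = \tfrac{1}{2}(\delta_{-1/2} + \delta_{1/2})$ when $n = 1$.
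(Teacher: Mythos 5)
Your proposal follows essentially the same route as the paper: defining $\al_{\Delta^n}(\bt)$ as the infimum of the set where $\cP_{\Delta^n}$ minimizes, invoking Theorem~\ref{main2} for the northeast dichotomy, Corollary~\ref{C:main3} and Remark~\ref{R:1D}/Theorem~\ref{main1} for the bounds and the $\bt=2$ endpoint, and then a $\Gamma$-convergence-style limiting argument combined with the $d_\infty$-local uniqueness from \cite{LimMcCann21} and the Euler--Lagrange inclusion to force strictness at the threshold and continuity of $\al_{\Delta^n}$. The only real difference is organizational --- you bundle the paper's separate $\al$- and $\bt$-perturbation arguments into a single ``stability principle,'' which when fleshed out would have to track the same uniform bound on the $d_\infty$-uniqueness radius (the $r=r(\bt,\bt^*,n)$ of \cite[Cor.~4.3]{LimMcCann21}) and the concentrated comparator $\mu_k'$ with the Perron--Frobenius step that the paper makes explicit.
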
}
{The quantity $\beta_n$ defined in Theorem \ref{T:threshold} represents the smallest value of $\beta$ such that the graph of the threshold function $\alpha_{\Delta^n}(\beta)$ intersects the diagonal boundary $\alpha=\beta$ of the mildly repulsive regime in parameter space. 
Later, in Corollary \ref{C:main4}, we will see that, while $\mathcal{E}_{\alpha,\beta}$ is trivial on this boundary, we can define a non-trivial family of interaction kernels $D_\alpha$ which continuously extend the symmetrically rescaled family of energies $\frac{\al\bt}{\al-\bt}\cE_{\Wab}$ 
to the line $\alpha=\beta$. 
In the meantime, let us describe upper and lower bounds on $\alpha_{\Delta^n}(\beta),$ which will be made rigorous in subsections \ref{SS:upper} and \ref{SS:Lower}, respectively:}

\begin{remark}[Bounds on the Transition Threshold]\label{R:threshold}
{By using the same family of rescaled kernels $\frac{\al\bt}{\al-\bt}{\Wab}$, 
Definition \ref{upperbounddef} specifies a function $\alpha_{\infty}^*=\alpha_{\infty}^*(\beta)$ which for $n\ge 2$ becomes independent of dimension. Corollary \ref{threshold lower bound} shows that $\alpha_\infty^*$ bounds the threshold function $\alpha_{\Delta^n}$ from above in the sense that $\alpha_\infty^*(\beta)\ge \alpha_{\Delta^n}(\beta)$ for all $\beta\in[2,\infty).$  
Conversely, in subsection \ref{SS:Lower}, we use violations of an Euler-Lagrange equation \eqref{EL} for the interaction energy \eqref{energy} to define a pair of dimensionally-dependent lower bounds for $\alpha_{\Delta^n}$. The first, $\underline{\alpha}_{\Delta^n}^+$, defined in 
\eqref{threshold lower bound}, arises from checking whether the Euler-Lagrange equation for the unit simplex is violated anywhere in $\R^n.$ The second family of bounds, $\underline{\alpha}_{\Delta^n}$ defined in 
\eqref{threshold lower estimate}, instead arise from looking for violations of the Euler-Lagrange equation at a specific point in $\R^n$ which is chosen based on the dimension. As we show in Proposition \ref{lowerlowerboundprop}, $\underline{\alpha}_{\Delta^n}^+$ is a sharper lower bound than $\underline{\alpha}_{\Delta^n},$ and the fact that it is sensitive to Euler-Lagrange violations at each point in $\R^n$ means that, unlike $\underline{\alpha}_{\Delta^n},$ its strength does not depend on the choice of reference point. However, the theoretical appeal of a sharper bound is muted by the apparent intractability of computing a bound which requires us to check an inequality at each point of $\R^n$. On the other hand, Definition \ref{D:weaker lower bound} allows us to implicitly define $\underline{\alpha}_{\Delta^n}(\beta)$ by using a single equation (or equivalently inequality) involving $\alpha$ and $\beta,$ and with an apt choice of reference point, this bound need not be much weaker than the theoretically superior bound $\underline{\alpha}_{\Delta^n}^+.$
Moreover, Proposition \ref{P:high dimensional collapse} guarantees that even the weaker bound $\underline{\alpha}_{\Delta^n}$ is asymptotically sharp for large dimensions, in the sense that for each $\beta\ge 2,$ we have $\lim_{n\to\infty} \aln(\beta) = \ali(\beta).$} Even so,
 it would be interesting to
know the value of $\bt_n$ and of $\al_{\Delta^n}(\bt)$ in the range $\bt \in (2, \bt_n)$ more precisely. For example,  might $\al_{\De^n}\equiv\alnd$?
\end{remark}


\begin{remark}[Open global minimization problems]
An interesting open problem is to determine the structure of minimizers of $\mathcal{E}_{W_{\al,\bt}}$ for $2<\bt<\al<\al_{\Delta^n}(\bt)$. Carrillo, Figalli, and Patacchini showed the supports of such minimizers must have finite cardinality, and placed a bound on this cardinality \cite{CarrilloFigalliPatacchini17}, but little else is known about this subregime. 
If $n=1$ and $\beta=2$, {identifying the} global minimizers of $\mathcal{E}_{W_{\al,2}}$ along the segment $(\al,\bt)\in (2,3)\times \{2\}$ of the centrifugal line {was highlighted by us as another open problem in the original release of this preprint.} 
 {Shortly thereafter, the latter problem was elegantly solved by R.~Frank \cite{Frank21+}, who 
 used Fourier analysis, convexity and the Euler-Lagrange equation \eqref{EL} to show 
 the (unique centered) solution takes the form 
 $d\mu_\al(x) = C (R^2-x^2)_+^{(1-\al)/2}dx$ for certain explicit constants $C,R>0$ depending on $\al\in(2,3)$.}
\end{remark}

{
\begin{remark}[Physically realistic potentials]
The mildly repulsive regime $\alpha > \beta \ge 2$ which we address may be unphysical in several respects:  the potentials $W_{\al,\bt}(x)$ grow rapidly as $r=|x| \to \infty$
(meaning long range forces increase without bound), yet remain bounded at $r=0$, which permits a positive fraction of the particles to condense on the same point.  
These may or may not be desirable features, depending on what one is trying to model.
It is perhaps worth pointing out the global energy minimizers $\nu$ we identify for these potentials will remain $d_\infty$-local minimizers (see \cite{DaviesLimMcCann21+a} and \eqref{KRW metric})
for any other potential $W$ which agrees with $W_{\alpha,\bt}$ in a neighourhood of $|x|=0$ and of $|x|=1$ when $\al>\al_{\Delta^n}(\bt)$ (or of $x \in \spt \nu - \spt \nu$ more generally).  This includes potentials which need not be spherically symmetric, nor grow at infinity.  On the other hand,  our techniques say nothing obvious about potentials  with singularities at the origin that prevent condensation onto points,   such as $W_{\al,\bt}$
with $\bt<2$ or bond order potentials more generally.
\end{remark}
}

Finally, taking the limit $\bt \to \al$ for the rescaled potential $\overline{W}_{\al,\bt} = \frac{\al\bt}{\al-\bt}W_{\al,\bt}$ (which has minimum value $-1$), leads us to introduce the following new class of interaction kernels,
\be
{D_\al(x):= \alpha^2\frac{\p}{\p\alpha}W_{\alpha,\beta}(x) }= |x|^\al (\al\log |x| - 1), \q \al \in \R \setminus \{0\}
\ee
 which form another intriguing one-parameter family of attractive-repulsive potentials uniquely minimized at $|x|=1$. {This family continuously extends of the two-parameter family of rescaled potentials $\overline{W}_{\alpha,\beta}$ to the portion of the boundary of the mildly repulsive regime which lies on the diagonal line $\alpha=\beta.$ This interpretation is supported by the following corollary, which follows from the proof of Theorem \ref{main2} and relates the minimizers of  $\mathcal{E}_{W_{\al,\bt}}$ to those of $ \mathcal{E}_{D_{\al}}$:}

\begin{corollary}[Relation to minimizers of limiting potential {on the diagonal}]
\label{C:main4}
 If $\cP_{\Delta^n}$ minimizes $ \mathcal{E}_{W_{\al,\bt}}$ for some $\al > \bt > 0$, then $\cP_{\Delta^n}$ uniquely minimizes $\mathcal{E}_{D_\ga}$ on $\cP(\Rn)$ for all $\ga \ge \al$. 
 Conversely, if $\cP_{\De^n}$ minimizes $ \mathcal{E}_{D_\bt}$ for some $\bt > 0$, then $\cP_{\De^n}$ uniquely minimizes $\mathcal{E}_{W_{\al,\bt}}$ on $\cP(\Rn)$ for all 
 $\al>\bt$. Thus from Remark \ref{R:threshold}, $\cP_{\Delta^n}$ minimizes $\mathcal{E}_{D_\al}$  uniquely if $\al > \bt_n$, and fails to minimize $\cE_{D_\al}$ if $ 0 < \al <\bt_n$.
 \end{corollary}
 {In effect, the preceding corollary states that, if unit simplices minimize $\mathcal{E}_{W_{\alpha,\beta}}$ for some point $(\alpha,\beta)$ in the mildly repulsive regime, they also minimize $\mathcal{E}_{D_\gamma}$ for all $\gamma\ge \alpha$. By the formal relation $\mathcal{E}_{D_\gamma}=\frac{d}{d\alpha}\left.\mathcal{E}_{W_{\alpha,\beta}}\right|_{\alpha=\gamma},$ this means that, as $\gamma$ increases from $\alpha,$ the interaction energy of the unit simplex decreases more quickly (or increases more slowly) than that of any non-simplicial measure. A rigorous version of this heuristic comparison argument is crucial to the proof of Theorem \ref{main2}. On the other hand, this corollary states that, when we consider the closure $\alpha\ge\beta\ge 2$  of the mildly repulsive regime in parameter space and interpret ${\overline W}_{\alpha,\alpha}:=D_{\alpha},$ then the region on which $\mathcal{P}_{\Delta^n}$ minimizes $\overline{\mathcal{E}}_{\alpha,\beta}$ is a closed subregion. In other words, this provides us with a reasonable way of extending the threshold function to the boundary $\alpha=\beta$.}


\section{Classifying minimizers at $(\al,\bt) =(4,2)$}

Our first task is to adapt Lopes' proof \cite{L19} of energetic convexity from densities to measures in Lemma \ref{linearity},  extracting conditions for strict convexity;
 see \cite{CarrilloDelgadinoDolbeaultFrankHoffman19} and \cite{DaviesLimMcCann21+a} for the analogous extension in the interval $(\al,\bt) \in (2,4) \times \{2\}$,
whose endpoint we now analyze.

\begin{definition}[Second moment tensor] The second moment tensor for $\mu \in \cP(\R^n)$ is the $n \times n$ matrix given by
\be\label{MOI}
I(\mu)=\int x \otimes x \, d\mu(x) = \bigg( \int x_ix_j d\mu(x)\bigg)_{i, j \in \{1,\ldots, n\}}.
\ee
\end{definition}

\begin{lemma}[Moment criteria for strict convexity]
\label{linearity}
For any $\mu_0, \mu_1 \in \cP_0(\R^n)$ {having finite fourth moments}, set $a(t):=\mathcal{E}_{W_4}(\mu_t)$ where $\mu_t := (1-t)\mu_0+t\mu_1$. 
Then $a(t)$ is convex, and depends affinely on $t\in [0,1]$ if and only if $I(\mu_0) = I(\mu_1)$.
\end{lemma}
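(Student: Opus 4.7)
The plan is to reduce everything to a direct algebraic identity, exploiting the fact that $W_4(x-y) = |x-y|^4/4$ is a polynomial of degree $4$, so the energy is a finite combination of moments of $\mu$ of order at most $4$. Since we assume finite fourth moments, all the relevant double integrals are absolutely convergent by Cauchy--Schwarz, and Fubini applies throughout.

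First I would expand
\begin{equation*}
|x-y|^4 = \bigl(|x|^2 + |y|^2 - 2\, x\cdot y\bigr)^2 = |x|^4 + |y|^4 + 2|x|^2|y|^2 + 4(x\cdot y)^2 - 4|x|^2\, x\cdot y - 4|y|^2\, x\cdot y,
\end{equation*}
and integrate against $d\mu(x)\, d\mu(y)$. Writing $M_k(\mu) := \int |x|^k\, d\mu$ and $I(\mu)$ as in \eqref{MOI}, the cross terms containing a single factor of $x\cdot y$ vanish because $\mu \in \mathcal{P}_0(\R^n)$ is centered; the quadratic-in-$y$ term $\iint (x\cdot y)^2\, d\mu\, d\mu = \sum_{i,j} I(\mu)_{ij}^2 = \mathrm{tr}(I(\mu)^2) = \|I(\mu)\|_F^2$; and $\iint |x|^2|y|^2\, d\mu\, d\mu = M_2(\mu)^2 = (\mathrm{tr}\, I(\mu))^2$. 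This yields the key identity
\begin{equation*}
\mathcal{E}_{W_4}(\mu) \;=\; \tfrac{1}{4}\, M_4(\mu) \;+\; \tfrac{1}{2}\, \|I(\mu)\|_F^2 \;+\; \tfrac{1}{4}\,(\mathrm{tr}\, I(\mu))^2
\end{equation*}
valid for every $\mu \in \mathcal{P}_0(\R^n)$ with $M_4(\mu)<\infty$.

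Next I substitute $\mu_t = (1-t)\mu_0 + t\mu_1$ into this expression. Both $M_4$ and $I$ are linear functionals of $\mu$, so $M_4(\mu_t)$ is affine in $t$, while $I(\mu_t) = (1-t) I(\mu_0) + t I(\mu_1)$. Hence $a(t)$ is a quadratic polynomial in $t$ whose leading coefficient is
\begin{equation*}
\tfrac{1}{2}\, \|I(\mu_1) - I(\mu_0)\|_F^2 \;+\; \tfrac{1}{4}\,\bigl(\mathrm{tr}\, I(\mu_1) - \mathrm{tr}\, I(\mu_0)\bigr)^2 \;\geq\; 0,
\end{equation*}
which proves convexity and shows that $a$ is affine if and only if both summands vanish. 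Vanishing of the first forces $I(\mu_0) = I(\mu_1)$, and this already implies vanishing of the second; conversely, $I(\mu_0) = I(\mu_1)$ makes both summands zero. This establishes the claimed characterization.

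There is no real obstacle: the only subtlety is ensuring that all four-variable moment integrals are finite and that Fubini applies, which is automatic under the stated fourth-moment hypothesis. The heart of the argument is that for the quartic kernel $W_4$, the interaction energy on centered measures depends on $\mu$ only through the scalar $M_4(\mu)$ and the symmetric tensor $I(\mu)$, in the former linearly and in the latter through a strictly convex quadratic form.
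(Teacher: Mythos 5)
Your proof is correct and follows essentially the same approach as the paper: both expand $|x-y|^4$ into moments and reduce the question to the sign of the coefficient of $t^2$. The paper's version is marginally shorter because it applies bilinearity first, writing $a''(t)=2\mathcal{E}_{W_4}(\mu_0-\mu_1)$ and then exploiting that the signed measure $\eta=\mu_0-\mu_1$ has vanishing zeroth moment (so the $|x|^4$ and $|y|^4$ terms drop out immediately), whereas you first derive the full identity $\mathcal{E}_{W_4}(\mu)=\tfrac14 M_4(\mu)+\tfrac12\|I(\mu)\|_F^2+\tfrac14(\mathrm{tr}\,I(\mu))^2$ for a probability measure and then substitute the interpolation; your identity is in fact exactly the modified formula the paper invokes later in the proof of Theorem \ref{main1}.
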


\begin{proof} 
Fix $\mu_0,\mu_1 \in \cP_0(\R^n)$ with fourth moments. Since $\mathcal{E}_{W_4}(\mu)$ is a quadratic function of $\mu$, we see $a''(t)=2\mathcal{E}_{W_4}(\mu_0-\mu_1)$.
Thus convexity and affinity of $a(t)$ on $t\in[0,1]$ depend on the sign of
\[
8\mathcal{E}_{W_4}(\mu_0-\mu_1)=\iint_{\mathbb{R}^n\times \mathbb{R}^n}|x-y|^4 d(\mu_0-\mu_1)(x)d(\mu_0-\mu_1)(y).
\]
Vanishing of the zeroth and first moments of $\eta:=\mu_0-\mu_1$
allows us to express
$\cE_{W_4}(\eta)$ 
as the following sum of squares involving the second moment tensors $I(\eta) :=I(\mu_0)-I(\mu_1)$ 
from \eqref{MOI}

\begin{align*}
8\mathcal{E}_{W_4}(\eta)
&=\iint_{\R^n\times\R^n} [4(x\cdot y)^2+ 2|x|^2|y|^2]d\eta(x)d\eta(y)
\\ &=4 \Tr(I(\eta)^2) + 2 (\Tr I(\eta))^2.
\end{align*}
Thus $\cE_{W_4}(\eta) \ge 0$ with equality if and only if $I(\mu_0)=I(\mu_1)$, as desired.
\end{proof} 

\begin{lemma}[Second moments for measures on centered spheres]
\label{sameMOI} 
Let $\bbS_r$ be the centered sphere of radius r in $\R^n$, and let $\mu \in 
\cP(\bbS_r)$. If $I(\mu) = \la {\rm Id}$ for some $\la >0$, then $I(\mu)=I(\sigma_r)$ where $\sigma_r$ is the uniform probability on $\bbS_r$.
\end{lemma}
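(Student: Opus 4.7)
The plan is to compare traces. The hypothesis that $\mu$ is supported on $\bbS_r$ means $|x|^2 = r^2$ holds $\mu$-almost everywhere, and therefore
\[
\Tr I(\mu) = \int_{\R^n} |x|^2 \, d\mu(x) = r^2.
\]
If in addition $I(\mu) = \la \Id$, then taking the trace forces $n\la = r^2$, so $\la = r^2/n$ is uniquely determined by $r$ and $n$ alone.

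Next I would compute $I(\sigma_r)$. By the rotation invariance of $\sigma_r$, the matrix $I(\sigma_r)$ commutes with every $R \in O(n)$: for any rotation $R$, a change of variables in the defining integral gives $R \, I(\sigma_r) \, R^\top = I(R_\# \sigma_r) = I(\sigma_r)$. By Schur's lemma (or just by the elementary fact that the only matrices commuting with all of $O(n)$ are scalar), $I(\sigma_r) = \mu \Id$ for some scalar $\mu$. Applying the same trace computation as above --- now using that $\sigma_r$ is also supported on $\bbS_r$ --- yields $\mu = r^2/n$.

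Combining the two identifications gives $I(\mu) = (r^2/n)\Id = I(\sigma_r)$, which is the desired conclusion. There is no real obstacle here; the lemma is essentially the observation that once the second moment tensor of a measure on $\bbS_r$ is known to be a scalar multiple of the identity, the scalar is pinned down by the radial constraint alone, so it must coincide with the value produced by the uniform measure.
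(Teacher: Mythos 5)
Your proof is correct, and it takes a genuinely different route from the paper's. The paper argues by averaging: since $I(\mu)=\la\Id$, any rotation $R$ gives $I(R_\#\mu)=R\,I(\mu)\,R^\top=I(\mu)$; averaging $\mu$ over all rotations (by linearity of $I$) then produces $\sigma_r$ with the same tensor, and the computation ends there without ever identifying the constant $\la$. You instead pin down the constant explicitly: the support constraint forces $\Tr I(\mu)=r^2$, so $\la=r^2/n$, and a parallel argument (rotation invariance plus the same trace constraint) shows $I(\sigma_r)=(r^2/n)\Id$ as well. Your version is a bit more computational and yields extra information --- the actual value of the scalar --- whereas the paper's is more economical because it never needs to know $\la$, only that the tensor is unchanged by averaging. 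One small point of hygiene: you reuse the symbol $\mu$ both for the measure and for the scalar such that $I(\sigma_r)=\mu\Id$; pick a different letter for the latter to avoid a clash.
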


\begin{proof}  If $I(\mu)=\la{\rm Id}$, any rotation $R\mu$ of $\mu$ has the same second moment tensor $I(R\mu)=I(\mu)$.
Now if we uniformize $\mu$ by averaging over its rotations, the resulting measure $\sigma_r$ will have the same {second} moment {tensor} as $\mu$ due to the linearity of $I$. 
\end{proof} 

It is plausible that the following lemma is known, but lacking a reference we provide a proof for the sake of clarity and completeness.

\begin{lemma}[Minimizing moments under moment constraints]
\label{moments} 
Let $0 < p < q < \infty$, $C>0$ and $\mu_0 \in \cP(\R^n)$. Then
\be
\mu_0 \in \argmin \bigg\{ \int |x|^q d\mu(x) \ \bigg| \ \mu  \in \cP(\R^n),\  \int |x|^p d\mu(x) =C \bigg\}  \nn
\ee
if and only if $\mu_0$ is concentrated on the centered sphere of radius $C^{1/p}$.
\end{lemma}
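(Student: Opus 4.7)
The plan is to reduce this to a single application of Jensen's inequality. Since $q > p > 0$, the function $h(s) := s^{q/p}$ is strictly convex on $[0,\infty)$. Writing $|x|^q = h(|x|^p)$ and applying Jensen's inequality to the pushforward of $\mu$ under $x \mapsto |x|^p$, I get
\begin{equation*}
\int_{\R^n} |x|^q \, d\mu(x) = \int_{\R^n} h(|x|^p) \, d\mu(x) \ge h\!\left( \int_{\R^n} |x|^p \, d\mu(x) \right) = h(C) = C^{q/p}
\end{equation*}
for every $\mu \in \cP(\R^n)$ satisfying the constraint $\int |x|^p d\mu = C$. This identifies $C^{q/p}$ as the infimum over the constraint set.

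Next I would characterize the equality case. Since $h$ is strictly convex, equality in Jensen's inequality forces the function $x \mapsto |x|^p$ to be $\mu$-almost surely constant. The value of this constant is forced by the constraint to equal $C$, so $|x| = C^{1/p}$ for $\mu$-a.e.\ $x$; equivalently, $\mu$ is concentrated on the centered sphere $\bbS_{C^{1/p}}$ of radius $C^{1/p}$.

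For the converse direction, any $\mu$ concentrated on $\bbS_{C^{1/p}}$ automatically satisfies $\int |x|^p d\mu = C^{p \cdot (1/p)} = C$ (so it lies in the constraint set) and attains $\int |x|^q d\mu = C^{q/p}$, matching the lower bound just established; hence such $\mu$ is a minimizer.

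There is no substantive obstacle here: the entire argument is Jensen's inequality with its strict-convexity equality case, plus a trivial verification. The only minor care needed is to observe that $q/p > 1$ (not merely $q > p$) is what delivers strict convexity of $h$, and to note that when $\int |x|^p d\mu = +\infty$ for some $\mu$, that $\mu$ is excluded from the constraint set, so the argument need only be run on measures with finite $p$-th moment, where Jensen applies without qualification.
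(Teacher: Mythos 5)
Your proof is correct and is essentially the same as the paper's: both reduce to Jensen's inequality for the strictly convex map $t \mapsto t^{q/p}$ applied to the pushforward of $\mu$ under the radial (power) map, with the equality case forced by strict convexity. The only cosmetic difference is that the paper first pushes forward through the modulus map $m(x)=|x|$ to a measure on $\R_+$ before applying Jensen, whereas you apply it directly; the content is identical.
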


\begin{proof} 
 Let $m(x)=|x|$ be the modulus map for $x \in \R^n$, and let $\eta:= m_\#(\mu) \in \cP({[0,\infty)})$ be the push-forward of $\mu \in \cP(\R^n)$ by the map $m$. Then $\int_{\R^n} |x|^p d\mu(x) = \int_0^\infty r^p d\eta(r)$ for any $p>0$. Hence from now on we assume $\eta \in \cP({[0,\infty)})$ and $\int r^p d\eta(r) = C$. Recall Jensen's inequality, which states that
if $f:\R \to \R$ is convex and $X$ is a real-valued random variable
with average value $E[X]$, then $E[f(X)] \ge f(E[X])$, and equality holds if and only if $f$ is linear on the interval $[\inf X, \sup X]$.
With $f(r)=r^{q/p}$, Jensen's inequality yields $\int r^q d\eta(x) \ge \bigg(\int r^p d\eta(x)\bigg)^{q/p} = C^{q/p}$, and moreover equality holds if and only if $\eta$ is supported at a point in ${[0,\infty)}$, since $f$ is strictly convex on ${[0,\infty)}$. This proves the lemma. 
\end{proof} 

\begin{proof}[{Proof of Theorem \ref{main1}}] Define 
\[F(\mu)=\frac{1}{4}\iint |x-y|^4 d\mu(x)d\mu(y), \ G(\mu)=\frac{1}{2} \iint |x-y|^2 d\mu(x)d\mu(y)
\]
 so that $2E 
 =F-G$. Then for $\mu \in \cP_0(\R^n)$, 
 \[
G(\mu) = \int |x|^2 d\mu(x) 
= \Tr I(\mu) 
\]
is no longer quadratic,  but depends linearly on $\mu$ instead.
Applying the calculation from the proof of Lemma \ref{linearity}, modified slightly to account for the fact that $\int d\mu=1$ whereas $\int d\eta=0$, we get:
\begin{align}\nonumber
F(\mu) &=  
 \frac{1}{2}\int |x|^4 d\mu(x) + \frac{1}{2}
  (\Tr I(\mu))^2 
 + \Tr (I(\mu)^2).
 \end{align}
Thus the energy $\mathcal{E}_{W_{4,2}}$ is convex, and by Lemma \ref{linearity} its minimizers must all share the same second moment tensor. Convexity also implies  $\mathcal{E}_{W_{4,2}}$ admits a spherically symmetric minimizer. 
This yields that this common {second moment tensor} must be $\la {\rm Id}$ for some $\la >0$ to be determined. This leads us to define
\be
A_\la=\{ \mu \in \cP_0(\R^n) \ | \ I(\mu) = \la {\rm Id} \}. \nn
\ee
For the correct choice of $\lambda$,  $A_\la$ contains all minimizers of \eqref{energy}, and moreover by the above formulas for $F$ and $G$, for every $\mu \in A_\la$ we have 
\begin{align}\label{energyform}
2E(\mu) = \frac{1}{2}\int |x|^4 d\mu(x) + \frac{1}{2}n^2\la^2 +n \la^2 -n \la.
\end{align}
This leads us to consider minimizing the fourth moment over $A_\la$. Set
\be
B_\la=\{ \mu \in \cP_0(\R^n) \ | \ \Tr I(\mu)=n\la\}. \nn
\ee
Notice $A_\la \subset B_\la$. Now Lemma \ref{moments} asserts that $\mu$ minimizes $\int |x|^4 d\mu(x)$ over $B_\la$ if and only if $\mu$ is concentrated on the centered sphere of radius $r:=\sqrt{n\la}$. But observe that $ \sigma_r$, the uniform probability on the sphere of radius $r$, also belongs to $A_\la$. This yields that the set of minimizers $X\subset \cP_0(\R^n)$ for \eqref{energy} is precisely the following:
 \begin{align}
\nonumber X&:= \{ \mu \in \cP_0(\R^n) \cap \cP(\bbS_{\sqrt{n\la}}) \ | \ I(\mu)=\la {\rm Id} \} \\
\label{X:=}  &= \{ \mu \in \cP_0(\R^n) \cap \cP(\bbS_{\sqrt{n\la}}) \ | \ I(\mu)=c {\rm Id} \  \text{ for some } c >0 \}
 \end{align}
where $\bbS_r$ is the centered sphere of radius $r$ in $\R^n$, and the second equality is due to Lemma \ref{sameMOI}. Notice $X$ is convex since $I$ is linear in $\mu$. 

Finally let us determine the optimal $\la$. By \eqref{energyform}, $2E(\mu)=n^2\la^2+n\la^2- n\la$ for any $\mu \in X$, and $\frac{dE}{d\la}=0$ gives $\la = \frac{1}{2n+2}$, hence $r=\sqrt{n\la} = \sqrt{\frac{n}{2n+2}}$ as claimed.  
\end{proof}

\begin{example}[Infinite-dimensional attractor at transition threshold]\label{E:threshold examples}
{If $(\alpha,\beta)=(4,2)$, then the} spherical shell $\sigma_r$ of radius $r:= \sqrt{\frac{n}{2n+2}}$ is a minimizer. For others, let $\{e_i\}$ be the standard basis of $\R^n$. Then the probability $ \frac{1}{2n}\sum_{i=1}^n (\delta_{re_i} + \delta_{-re_i}) $ clearly belongs to the set $X \subset \cP_0(\R^n)$ of minimizers from \eqref{X:=}, which can be also seen by Lemma \ref{sameMOI}.
And any rotation and convex combination of these is a minimizer due to the convexity of $X$,
which shows the set of minimizers is infinite dimensional. In particular, the minimizers do not need to coincide with each other even up to rotation and translation. The uniform measure on the vertices of the regular simplex inscribed in $\bbS_r$ is also a minimizer,
by the following standard observation.
\end{example}

\begin{remark}[Second moments for the uniform measure on the vertices of a regular simplex]
\label{R:simplex moments}
Let $\nu_d \in\cP_0(\R^n)$ denote the uniform measure on the $n+1$ vertices of a regular simplex with center of mass at the origin {and diameter $d$}. Then $I(\nu_d) = \frac{d^2}{2n+2} \Id$.
\end{remark}

\begin{proof} 
 Let $\mathds{1}=(1,1,\ldots,1) \in \R^{n+1}$.  The standard simplex is
$\Delta^n := \{x \in [0,\infty)^{n+1} \mid \mathds{1}\cdot x = 1\}$. Its vertices coincide with the standard basis vectors $e_0,\ldots,e_n$ for $\R^{n+1}$. {Note that its diameter is $\sqrt2$.}
We compute the second moments $I(\nu)$ of the uniform measure $\nu=\frac1{n+1}\sum_{i=0}^n \delta_{e_i}$ over these vertices,  and its translation $T_{\la} \nu=\frac1{n+1}\sum_{i=0}^n \delta_{e_i-\la \mathds{1}}$
along the principal diagonal $\mathds{1}$ for each $\la \in \R$:
\begin{align*}
I_{jk}(T_\la \nu) 
&= \frac1{n+1}\sum_{i=0}^n
(e_i -\la \mathds{1})_j (e_i-\la \mathds{1})_k
\\ &=\frac1{n+1}({\Id}_{jk} -2 \la + (n+1)\la^2),
\end{align*}
i.e. $I(T_\la \nu) = \frac1{n+1}\Id + \la (\la - \frac2{n+1}) \mathds{1}\otimes \mathds{1}$.
Note that the choice $\la = \frac{1}{n+1}$ makes  $\nu_{\sqrt2} = T_\la \nu$ centered at the origin and lie in the subspace $\mathds{1}^\perp$, and since $I(T_\la \nu)\, v= \frac{1}{n+1}v$ for any $v \in \mathds{1}^\perp$, we have $v_i \cdot I(T_\la \nu)\, v_j = \frac{1}{n+1}{\Id}_{ij}$ for any orthonormal basis $\{v_i\}$ of $\mathds{1}^\perp$, as desired. For general diameter $d$ we multiply $(d/\sqrt2)^2$.
\end{proof} 

\begin{remark}[Concerning $d_\infty$-local energy minimizers]
For $2 < \bt<\al$ or $2\beta = 4<\al$,  
two of us showed the measure $\nu_1$ {of unit diameter} in Remark \ref{R:simplex moments}  minimizes the energy uniquely (up to rotations and translations) $d_\infty$-locally \cite{LimMcCann21}; see also Simione \cite{Simione14}.  
Example \ref{E:threshold examples} shows that for $n\ge 2$ the uniqueness part of this statement no longer holds true at the endpoint $(\bt,\al)=(2,4)$ of the latter regime,  since  $\frac12 (\nu_1 + R_\theta \nu_1)$ is also minimizing,
and lies as $d_\infty$-close to $\nu_1$ as we like when $\theta$ is sufficiently small.  
\end{remark}

\section{Identifying mildly repulsive minimizers for $\al \ge \kn $}
For $\al\bt> 0$, 
let $w_\al$ and $w_{\al,\bt}$ be defined on ${(0,\infty)}$ by
\[ w_\al(r) = \frac{r^\al}{\al}, \qquad w_{\al,\bt}(r) = \frac{r^\al}{\al} - \frac{r^\bt}{\bt},
\]
so that $W_{\al,\bt}(x) = w_{\al,\bt}(|x|)$. If $\al \ne \bt$, 
the rescaled potential
\begin{equation}\label{rescaledsymmetry} \overline{w}_{\al,\bt}(r) = \frac{w_{\al,\bt}(r) }{-w_{\al,\bt}(1)} = \frac{\bt r^\al - \al r^\bt} {\al - \bt} = \overline{w}_{\bt,\al}(r)
\end{equation}
then satisfies $\overline{w}_{\al,\bt}(r) \ge -1$ 
on $r \ge 0$ with equality if and only if $r=1$. {We note that, while the present work is concerned only with the case where $\alpha>\beta>0,$ the rescaled potential $\overline{w}_{\alpha\beta}$ continues to satisfy $\overline{w}_{\alpha,\beta}\ge 1$ with equality if and only if $r=1$ on the broader regime $\alpha\beta>0.$
If instead $\alpha\beta<0$, then $\overline{w}_{\alpha,\beta}$ is uniquely maximized at $r=1.$} Define $\overline{W}_{\al,\bt}(x)=  \overline{w}_{\al,\bt}(|x|)$. Obviously  $ \mathcal{E}_{W_{\al,\bt}}$ and  $ \mathcal{E}_{\overline{W}_{\al,\bt}}$ share the same minimizers on $\cP(\R^n)$ as long as $\al>\bt$.
The crux of the proof of Theorem \ref{main2} is the following monotonicity:

\begin{lemma}[Rescaled potential increases with either exponent]
\label{positive}
For each $\al \ne 0$, $\bt \ne \al$, $r>0$, we have $\al\frac{\p}{\p \bt}\overline{w}_{\al,\bt}(r) \ge 0$ with equality holding if and only if $r=1$.
\end{lemma}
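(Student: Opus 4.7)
The plan is to compute $\frac{\partial}{\partial\beta}\overline{w}_{\al,\bt}(r)$ explicitly by the quotient rule and factor the result so that the desired sign is exhibited by a classical convexity inequality.

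First, treating $\al$ and $r>0$ as fixed and differentiating $\overline{w}_{\al,\bt}(r) = (\bt r^\al - \al r^\bt)/(\al-\bt)$ in $\bt$, I would obtain
\[
\frac{\partial \overline{w}_{\al,\bt}}{\partial \bt}(r) = \frac{(r^\al - \al r^\bt \log r)(\al-\bt) + (\bt r^\al - \al r^\bt)}{(\al-\bt)^2}.
\]
Expanding and cancelling the $\bt r^\al$ terms simplifies the numerator to
\[
\al r^\al - \al r^\bt - \al(\al-\bt) r^\bt \log r = \al r^\bt\bigl[\,r^{\al-\bt} - 1 - (\al-\bt)\log r\,\bigr].
\]
Therefore
\[
\al\,\frac{\partial \overline{w}_{\al,\bt}}{\partial \bt}(r) = \frac{\al^2 r^\bt}{(\al-\bt)^2}\bigl[\,r^{\al-\bt} - 1 - (\al-\bt)\log r\,\bigr].
\]

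Next, setting $x := (\al-\bt)\log r$, the bracket equals $e^x - 1 - x$. The elementary inequality $e^x \ge 1 + x$ (immediate from convexity of the exponential, or from $f(x)=e^x-1-x$ satisfying $f(0)=0$ and $f''(x)=e^x>0$) gives $e^x - 1 - x \ge 0$, with equality if and only if $x=0$. Since $\al\ne\bt$, the condition $x=0$ reduces to $r=1$.

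Finally, the prefactor $\al^2 r^\bt/(\al-\bt)^2$ is strictly positive for all $\al\ne 0$, $\bt\ne\al$ and $r>0$, so the sign of $\al\,\partial_\bt \overline{w}_{\al,\bt}(r)$ coincides with that of $e^x - 1 - x$. This yields nonnegativity, with equality precisely when $r=1$, completing the proof. There is no real obstacle here; the only subtlety is spotting the factorization that brings the quantity $e^x-1-x$ to the surface, after which the conclusion is immediate from convexity of the exponential.
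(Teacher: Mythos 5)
Your computation of the derivative is exactly the one the paper performs, and your final inequality $e^x - 1 - x \ge 0$ (with $x = (\al-\bt)\log r$) is the same elementary fact the paper invokes as $t - 1 - \log t \ge 0$ (with $t = r^{\al-\bt} = e^x$). The argument is correct and essentially identical to the paper's.
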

\begin{proof} 
Direct computation yields
\be
\al\frac{\p}{\p \bt}\overline{w}_{\al,\bt}(r) = \frac{\al^2 r^\bt}{(\al-\bt)^2}(r^{\al-\bt} -1 - \log r^{\al-\bt}). \nn
\ee
From this, the lemma follows from the fact that the function $t \mapsto t -1 -\log t \ge 0$ for $t > 0$ with equality holding only if $t=1$.
\end{proof}

\begin{figure}[H]
    \centering
    \includegraphics[scale=0.25]{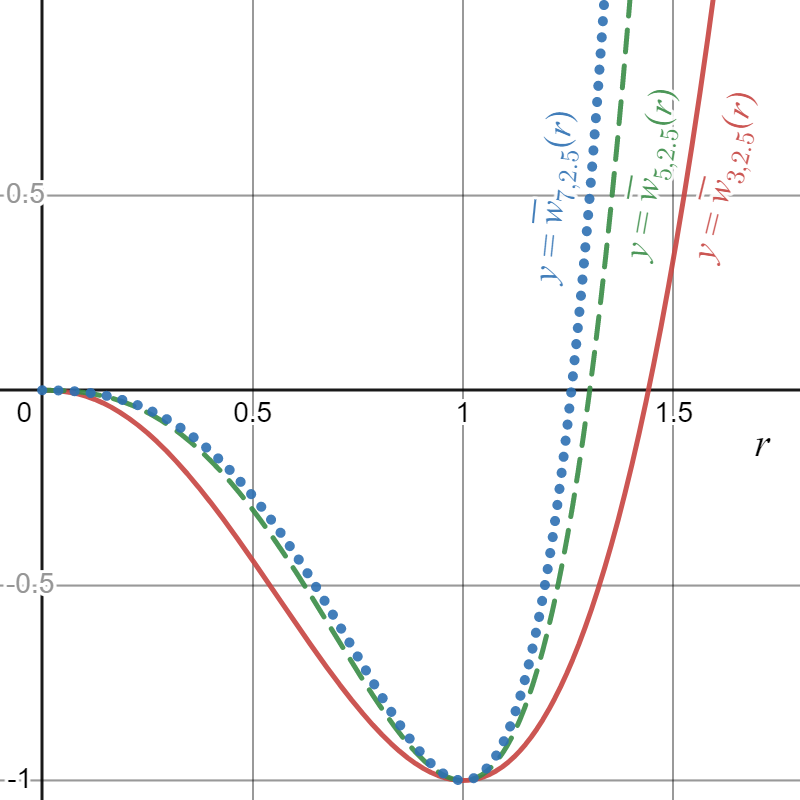}
    \caption{{Graphs of $\overline{w}_{\alpha,\beta}(r)$ for $\beta=2.5$ and $\alpha=3,5,7$ illustrating the results of Lemma \ref{positive}. In particular, although all three graphs agree for $r\in\{0,1\},$ we see that $\overline{w}_{\alpha,\beta}(r)$ is a strictly increasing function of $\alpha$ for each $r\in (0,1)\cup(1,\infty).$ Note that the symmetry 
    $\overline{w}_{\al,\bt}=\overline{w}_{\bt,\al}$ from \eqref{rescaledsymmetry} ensures that the monotonicity in $\beta$ proven in Lemma \ref{positive} and the monotonicity in $\alpha$ shown in this figure are equivalent.}}
    \label{fig:2d}
\end{figure}

\begin{proof}[{Proof of Theorem \ref{main2}}] Assume $\al > \bt > 0$ and $\cP_{\Delta^n}$ minimizes $ \mathcal{E}_{W_{\al,\bt}}$.  It is enough to prove $\cP_{\Delta^n}$ uniquely minimizes both $\mathcal{E}_{W_{\al+\epsilon,\bt}}$ and 
$ \mathcal{E}_{W_{\al,\bt+\epsilon}}$ on $\cP(\Rn)$ for all $\ep \in (0, \al-\bt)$, and that the support of $\nu \in \cP_{\De^n}$ uniquely minimizes both
$\nu * W_{\al+\ep,\bt}$ and $\nu * W_{\al,\bt+\ep}$ on $\Rn$.
Let $\rho(x,y)=|x-y|$. For $\mu \in \cP(\R^n)$, observe the push-forward $ \tilde \mu:= \rho_\#(\mu \otimes \mu) \in \cP({[0,\infty)})$ via the map $\rho$ satisfies, since $W(x) = w(|x|)$,
\be
 \mathcal{E}_{\overline{W}_{\al,\bt}}(\mu) =\frac12 \int_0^\infty \overline{w}_{\al,\bt}(r) d\tilde\mu(r).
\ee
Let $\nu \in \cP_{\Delta^n}$. By assumption $\int \overline{w}_{\al,\bt}(r) d\tilde\mu(r) \ge \int \overline{w}_{\al,\bt}(r) d\tilde\nu(r)$. 
Since $\spt(\tilde \nu) = \{0,1\}$ and $\overline{w}_{\al,\bt}(r)$ is constant in $\al>\bt>0$ at $r=0$ and $1$,
\begin{align*}
 \int \overline{w}_{\al,\bt}(r) d\tilde\nu(r) =  \int \overline{w}_{\al+\ep,\bt}(r) d\tilde\nu(r)
 =  \int \overline{w}_{\al,\bt+\ep}(r) d\tilde\nu(r) 
 \end{align*}
 for all $0<\ep < \al-\bt$.
On the other hand, by Lemma \ref{positive} 
(and the symmetry of $\overline{w}$ in $\al,\bt$), $\ep>0$ implies
\begin{align*}
 \int \overline{w}_{\al,\bt}(r) d\tilde\mu(r) \le  \int \overline{w}_{\al+\ep,\bt}(r) d\tilde\mu(r), \  \int \overline{w}_{\al,\bt}(r) d\tilde\mu(r) \le \int \overline{w}_{\al,\bt+\ep}(r) d\tilde\mu(r)
\end{align*}
with equality holding only if $\spt(\tilde \mu) \subset \{0,1\}$, i.e. only if $\mu$ is concentrated on the set of vertices of a unit simplex. {Hence,} if $\mu$ minimizes $\mathcal{E}_{\overline{W}_{\al+\ep,\bt}}$ or $\mathcal{E}_{\overline{W}_{\al,\bt+\ep}}$ 
{then it must be concentrated on the vertices of a unit simplex. Thus, we use the isometry described in \cite[Remark 1.2]{LimMcCann21} to write $\mu=\sum_{i=1}^{n+1} m_i\delta_{2^{-1/2}e_i}\in \mathcal{P}(\R^n).$ In particular, if we let $m=(m_1,...,m_{n+1})$ be the vector of masses and if we, without loss of generality, define an $(n+1)\times(n+1)$ matrix $A$  by $$A_{ij}=n\overline{w}_{\alpha+\epsilon,\beta}\left(\frac{1}{\sqrt{2}}\right)\left[\frac{1-\operatorname{Id}_{ij}}{n}\right],$$ then we may write 
$$\mathcal{E}_{\overline{W}_{\alpha+\epsilon, \beta}}(\mu)=m^T A m=n\overline{w}_{\alpha+\epsilon,\beta}\left(\frac{1}{\sqrt{2}}\right)m^T\overline{A}m,$$
where we define $\overline{A}_{ij}:=\frac{1}{n}(1-\operatorname{Id}_{ij}).$
Thus, noting that each of the rows and columns of $\overline{A}$ sums to $1$, and noting that $\overline{A}^2$ is a positive matrix, the Perron-Frobenius theorem implies that $\overline{A}$ has $1$ as an eigenvalue with multiplicity $1$, and that all other eigenvalues of $\overline{A}$ have absolute value less than $1.$ Since $\overline{m}:=\frac{1}{n+1}(1,...,1)$ is an eigenvector of $\overline{A}$ with eigenvalue $1,$ the spectral theorem implies that $m$ maximizes the quantity $m^T\overline{A}m$ among all vectors in $\R^{n+1}$ with entries summing to $1.$ In turn, since the constant $\overline{w}_{\alpha+\varepsilon,\beta}(\frac{1}{\sqrt{2}})$ is negative, we conclude that, in order to minimize $\mathcal{E}_{\overline{W}_{\alpha+\epsilon, \beta}},$} 
$\mu$ must uniformly distribute its mass over the vertices of a unit simplex, i.e. $\mu \in \cP_{\Delta^n}$. 
This proves the first identity \eqref{twin argmin identities}.

Observe that the Euler-Lagrange equation from e.g. \cite{DaviesLimMcCann21+a} asserts
\begin{equation}\label{EL}
\spt \nu \subset \argmin (\nu * W_{\al,\bt}).
\end{equation}
Since the vertices of a unit simplex, $\spt \nu$, is characterized as the maximal set of points at distance one from each other, 
Lemma \ref{positive} shows
\[
\nu * W_{\al,\bt} \le \nu * W_{\al+\ep,\bt} \quad{\rm and}\quad
\nu * W_{\al,\bt} \le \nu * W_{\al,\bt+\ep}\]
with equalities holding precisely on $\spt \nu$.  This implies the second identity \eqref{twin argmin identities} to establish Theorem \ref{main2}.
\end{proof}

\begin{proof}[{Proof of Corollary \ref{C:main4}}]
Lemma \ref{positive} shows $\overline{w}_{\al,\bt}(r)$ is a nondecreasing function of $\bt \in (0,\al)$, 
and strictly increasing unless $r \in \{0,1\}$.
Also $\lim_{\bt \to \al} \overline{w}_{\al,\bt}(r) =  r^\al (\al\log r - 1)$, so
 $\lim_{\bt \to \al} \overline{W}_{\al,\bt}(x) =  D_\al (x)$.
As in the preceding proof,  if $\cP_{\Delta^n}$ minimizes ${\mathcal E}_{W_{\al,\bt}}$, comparison shows it then minimizes
 ${\mathcal E}_{D_\al}$ uniquely. Conversely if $\cP_{\Delta^n}$ minimizes ${\mathcal E}_{D_\bt}$, 
 then minimizes ${\mathcal E}_{W_{\al,\bt}}$ uniquely for all $\al>\bt$.
 \end{proof}

\begin{proof}[Proof of Corollary \ref{C:main3}]
Theorems \ref{main1}--\ref{main2} and Remarks \ref{R:1D} and \ref{R:simplex moments} yield Corollary \ref{C:main3}. 
\end{proof}

\section{The transition threshold}
\label{S:threshold}
  In this section,  we establish the existence of a transition threshold $\al_{\Delta^n}(\bt)$ which separates the part of the mildly repulsive region $\bt \ge 2$
on which equidistribution  $\cP_{\Delta^n}$ over the vertices of the unit simplex minimizes the energy $\cE_{W_{\al,\bt}}$ from the part on which it does not.
Above the threshold,  these minimizers are unique up to rigid motions.
We also establish that this threshold lies in the range $[\alnd(\bt),\ali(\bt)] \subset [\aln(\bt),\ali(\bt)]$ given by Definitions \ref{upperbounddef}, \ref{D:threshold lower bound} 
and \ref{D:weaker lower bound}, 
which collapses to the point $\{\ali(\bt)\}$ in the high dimensional limit (Proposition \ref{P:high dimensional collapse}).


\begin{proof} [{Proof of Theorem \ref{T:threshold}}]
For $\bt\ge2$, the existence of $\al_{\Delta^n}(\bt) \in [\bt,\infty]$ satisfying \eqref{gthreshold} and \eqref{lthreshold} follow 
from Theorem \ref{main2};
 also $\al_{\Delta^n}(\bt) <\infty$ is asserted in \cite{LimMcCann21}.
The fact that $\al_{\Delta^n}(2) \le\kn$, 
existence of a minimal $\beta_n \in [2,\kn ]$ such that $\al_{\Delta^n}(\bt) = \bt$ for $\bt > \bt_n$, and (nonstrict) monotonicity of $\al_{\Delta^n}:[2, \bt_n] \longrightarrow [\bt_n, \kn]$ are consequences of Corollary \ref{C:main3}.
The centrifugal value $\al_{\De^n}(2)=\kn $ follows from Theorem \ref{main1} and Remark \ref{R:1D}.
We next establish that at least one of the containments \eqref{ethreshold} is strict by combining results from \cite{LimMcCann21}
with the strategy used to provide an analogous statement for a related problem in \cite{LimMcCann21+}.

For $p \in [1,\infty]$, recall that the Kantorovich-Rubinstein-Wasserstein distance between $\mu,\mu' \in \cP(\Rn)$ is defined by
\be\label{KRW metric}
d_p(\mu,\mu') 
:=\inf_{X \sim \mu, Y \sim \mu'} \| X - Y\|_{L^p},
\end{equation}
where the infimum is taken over arbitrary couplings of random vectors $X$ and $Y$ in $\Rn$
whose laws are given by $\mu$ and $\mu'$ respectively. 
The metrics $d_p$ are well-known to metrize weak convergence of measures on compact subsets $K \subset\Rn$
unless $p=\infty$ \cite{Villani03}.
Given such a compact set $K \subset \Rn$ and $\al > \bt \ge 2$,  we first claim that if $(\al,\bt) = \lim_{k\to\infty}(\al(k),\bt(k))$ for a sequence $\al(k)>\bt(k) \ge 2$,  
then the functionals $\cE_{W_{\al(k),\bt(k)}}$ $\Gamma$-converge to $\cE_{\al,\bt}$ on $(\Prob(K),d_2)$.
Since the potentials $\{W_{\al(k),\bt(k)}\}_k$ are uniformly equicontinuous on $K \times K$,  this is easy to prove 
using the argument, e.g., from Lemma 3.2 of \cite{LimMcCann21+}, so we do not give more details here.  Now Proposition 1.1 of \cite{DaviesLimMcCann21+a} ensures the minimizers of $\cE_{W_{\al,\bt}}$ on 
$\Prob(\Rn)$ exist and 
can all be translated to lie in a centered ball of radius $e^{1/\bt}$; as $k\to \infty$
it follows from this $\Gamma$-convergence that $d_2$-accumulation points of minimizers of $\cE_{\al(k),\bt(k)}$ 
therefore minimize $\cE_{\al,\bt}$ on $\cP(\Rn)$.  Taking $\bt(k)=\bt$ and $\al(k)\searrow \al_{\Delta^n}(\bt)$ 
then shows that the (nonstrict) first containment of \eqref{ethreshold} is a consequence of \eqref{gthreshold}.  
When $\al_{\De^n}(\bt)=\bt$, strict containment becomes trivial. We may therefore assume $ \al_{\De^n}(\bt)=: \al >\bt$,
and let $\bt(k)=\bt$ and $\al(k) \nearrow \al$.
We also assume $\bt>2$ because for $\bt=2 \le n$ strict containment follows from Theorem \ref{main1},  while for $(\bt,n) = (2,1)$ it is easy to check $\spt \psi = \{-\frac12, \frac12\} \subsetneq [-\frac12, \frac12]= \argmin (W_{3,2} * \psi)$. Since 
there exist minimizers $\mu_k$ of $\cE_{\al(k),\bt}$ on $\cP(\Rn)$ whose
support lies in the centered ball of radius $e^{1/\beta}$, weak compactness of the probability measures on this ball
yields a subsequential limit $d_2(\mu_k,\mu_\infty) \to 0$ (the subsequence having been relabelled $\mu_k$);  $\Gamma$-convergence
then ensures $\mu_\infty$ minimizes $\cE_{W_{\al,\bt}}$ on $\cP\Big(\overline{B_{e^{1/\beta}}(0)}\Big)$,  
hence on $\cP(\Rn)$ by \cite[Proposition 2.1]{DaviesLimMcCann21+a}.

The second containment in \eqref{ethreshold} follows from the first and the Euler-Lagrange equation described e.g. in Proposition 1.1 
of \cite{DaviesLimMcCann21+a}. To derive a contradiction,  assume neither containment in \eqref{ethreshold} is strict, 
so that $\mu_\infty \in \cP_{\De^n}$
and 
\be\label{to be proved}
\spt \mu_\infty = \argmin_{\Rn} W_{\al,\bt} * \mu_\infty.
\ee
 Set $\spt \mu_\infty = \{x_0,\ldots,x_n\}$ and $0<R<1/2$.
Since $d_2(\mu_k,\mu_\infty) \to 0$ and the  Euler-Lagrange equation applied to $\mu_k$, 
and the uniform convergence on every ball of  $W_{\al(k),\bt} * \mu_k$ to $W_{\al,\bt} * \mu_\infty$  together with  \eqref{to be proved} yields 
$$
1=\mu_k[\cup_{i=0}^n B_R(x_i)], \q {\rm while} \q \mu_k[B_R(x_i)] \in (\frac1{n+2},\frac1{n})
$$
for $k$ sufficiently large;   c.f. Lemma 4.3 of \cite{LimMcCann21+} or Corollary 3.6 of \cite{LimMcCann21}.
Setting 
\be\label{concentrated comparator}
\mu_k' := \sum_{i=0}^n \mu_k[B_R(x_i)] \delta_{x_i}
\ee
ensures $d_\infty(\mu_k,\mu_k')<R$.  On the other hand, if $\al(k)> \bt^*:=\frac13(\al+2\bt)$,
Corollary 4.3 of \cite{LimMcCann21} provides $r=r(\bt,\bt^*,n)$ such that 
$\mu_k'$ (and its rotates and translates) uniquely minimize $\cE_{W_{\al(k),\bt}}$ on a $d_\infty$-ball
of radius $r$ around $\mu_k'$.  But $\mu_k$ was chosen to minimize $\cE_{W_{\al(k),\bt}}$ globally on $\cP(\Rn)$.
Taking $R<r$ and $k$ correspondingly large therefore forces $\mu_k$ to be a rotate or translate of $\mu_{k}'$.
From e.g. the Perron-Frobenius theorem, $\mu_k$ then assigns equal mass to each point in $\spt\mu_k$,
hence $\mu_k \in \cP_{\De^n}$. Since $\al(k)<\al_{\De^n}(\bt)$ by construction, \eqref{lthreshold} produces the desired contradiction  $\mu_k \not\in \cP_{\De^n}$,
to establish that at least one of the containments in \eqref{ethreshold} is strict. From this, notice the monotonicity of $\al_{\Delta^n}:[2, \bt_n] \longrightarrow [\bt_n, \kn]$ must be strict in view of \eqref{twin argmin identities}, and implies $\bt_n \in (2,\kn )$.

It remains to deduce continuity of $\al_{\De^n}$ at each $\bt \in [2,\bt_n]$.
 Set
\[
\al_{\De^n}(\bt\pm) := \lim_{\ep \downarrow 0}  \al_{\De^n}(\bt\pm\ep).
\]
If $\al \in (\al_{\De^n}(\bt), \al_{\De^n}(\bt-))$ for some $\bt \in (2,\bt_n]$, then choosing $\mu_k$ to minimize $\cE_{W_{\al,\bt-1/k}}$ 
on $\cP(\Rn)$, after translation into a centered ball of radius $e^{1/{(\bt-1)}}$
 we can extract a subsequential $d_2$-limit $\mu_\infty$ of $\mu_k$. Notice  $\mu_k \not\in \cP_{\De^n}$,  while
$\Gamma$-convergence implies  $\mu_\infty$ minimizes $\cE_{\al,\bt}$ 
hence $\mu_\infty \in \cP_{\De^n}$ by Theorem \ref{main2}.  
 But then as above,  this contradicts the $d_\infty$-unique local minimality of $\mu_k'$ from \eqref{concentrated comparator} for $R$ sufficiently small and 
$k$ correspondingly large.
On the other hand, if $\al \in (\al_{\De^n}(\bt+), \al_{\De^n}(\bt))$ for some $\bt \in [2,\bt_n]$, then choosing $\mu_k$ to minimize $\cE_{W_{\al,\bt+1/k}}$ 
on $\cP(\Rn)$, we can extract a subsequential $d_2$-limit $\mu_\infty$ of $\mu_k$.  This time  $\mu_k \in \cP_{\De^n}$, while
$\Gamma$-convergence and $\al<\al_{\De^n}(\bt)$ imply $\mu_\infty \not\in \cP_{\De^n}$,  contradicting the fact that $\cP_{\De^n}$ is $d_2$-closed.
We conclude the desired continuity $\al_{\De^n}(\bt)= \al_{\De^n}(\bt\pm)$, which also implies $\al_{\De^n}(\bt_n)=\bt_n$.
\end{proof}

\subsection{Threshold upper bound independent of dimension $n \ge 2$}\label{SS:upper}

We now establish an upper bound $\ali(\bt)$ for the threshold $\alpha_{\Delta^n} (\bt)$.
Note that this upper bound and the quantities $\bti$ and $\fin(\bt)$ defining it become independent of dimension as soon as $n \ge 2$.
The asterisk on these quantities reminds us of their implicit dependence on $\min\{n,2\}$, however.

\begin{definition}[Threshold upper bound]\label{upperbounddef}
Set
\[
\bti:= \frac{\kn-2}{\log (\kn/2)} = 
\begin{cases} 
\frac{1}{\log(3/2)} &\ {\rm if}\ n=1,
\\ \frac{2}{\log 2} &\ {\rm if}\ n\ge 2.
\end{cases}
\]
For $\bt \ge 2$, 
define $\ali=\ali(\beta)$ as the largest solution of 
\begin{equation}\label{upperdefining}
\frac{e^{\alpha/\bti}}{\alpha}=\frac{e^{\beta/\bti}}{\beta}.
\end{equation}
\end{definition} 

\begin{remark}[Number of solutions]\label{R:number of solutions}
For any given $\beta\ge 2$ and $n\in\{1,2\}$, there are at most two solutions to equation (\ref{upperdefining}), which follows from the fact that $\fin(t):=1 -\frac{e^{t/\bti}}{t}$ is unimodal on $(0,\infty),$ 
 i.e. has a unique global maximum and no local minima.
In particular, we see 
$
t^2\bti e^{-t/\bti}\frac{d \fin}{dt} 
=t-\bti
$
is positive on $(0,\bti),$ zero at $\bti$, and negative on $(\bti,\infty)$.
Thus $\ali(\bt)=\bt$ if and only if $\bt \ge \bti$.
\end{remark}

\begin{remark}[Alternative interpretation]\label{alternative} 
Set
\[
\bar{w}_{\bt,\bt}(r) :=\lim_{\al \to \bt} \bar{w}_{\al,\bt}(r) = r^\bt (\bt\log r - 1),
\]
and let $z_{\al,\bt}$
denote the positive zero of $\bar{w}_{\al,\bt}$, where $z_{\al,\bt}=(\frac{\al}{\bt})^{\frac{1}{\al-\bt}}$ for $\al\ne \bt$ and  $z_{\bt,\bt}:= e^{1/\bt}$. Notice that $z_{\kn ,2}=\frac{3}{2},$ if $n=1$, and $z_{\kn,2}=\sqrt{2},$ if $n\ge 2.$ Hence, after some rearranging, we obtain $\bti$ from the equation $z_{\bti,\bti}=z_{\kn,2}$ and $\ali$ as the largest solution of $z_{\alpha,\beta}=z_{\kn,2},$ or rather $w_{\alpha,\beta}(z_{\kn,2})=0$.
\end{remark}

The following lemma and corollary demonstrate that $\ali$ is indeed an upper bound for the threshold function:
\begin{lemma}[Comparing pair potentials]
\label{signchange}
Let $2<\beta<\alpha<\kn$. Then
$\bar{w}_{\kn,2}(r)\le \bar{w}_{\alpha,\beta}(r)$ for all $r\in [0,z_{\alpha,\beta}]$ if and only if $z_{\alpha,\beta}\le z_{\kn,2}$.
\end{lemma}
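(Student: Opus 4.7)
The ``only if'' direction is immediate: specializing the hypothesis to $r=z_{\alpha,\beta}$ yields $\bar w_{\kn,2}(z_{\alpha,\beta}) \le \bar w_{\alpha,\beta}(z_{\alpha,\beta}) = 0$, and since $\bar w_{\kn,2}$ is strictly negative on $(0,z_{\kn,2})$ and vanishes at $z_{\kn,2}$, this forces $z_{\alpha,\beta}\le z_{\kn,2}$.

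For the ``if'' direction I will study $\psi(r):=\bar w_{\alpha,\beta}(r)-\bar w_{\kn,2}(r)$, viewed as a linear combination of the four distinct monomials $r^{2},\,r^{\beta},\,r^{\alpha},\,r^{\kn}$ with coefficient sign pattern $(+,-,+,-)$. By Descartes' rule for generalized polynomials (P\'olya--Szeg\H o), $\psi$ has at most three positive zeros counted with multiplicity. Since both $\bar w_{\alpha,\beta}$ and $\bar w_{\kn,2}$ attain their common minimum value $-1$ uniquely at $r=1$, I have $\psi(1)=\psi'(1)=0$, so $r=1$ is already a double zero of $\psi$ and at most one further positive zero can exist. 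A short Taylor computation yields $\psi''(1)=\alpha\beta-2\kn$, while $\psi(r)\sim \tfrac{\kn}{\kn-2}\,r^{2}>0$ as $r\to 0^{+}$, since the $r^{\beta}$ contribution is negligible when $\beta>2$. Also, $\psi(r)\to -\infty$ as $r\to\infty$ because the dominant term $-\tfrac{2}{\kn-2}\,r^{\kn}$ has negative coefficient (recall $\alpha<\kn$).

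The main case is $\psi''(1)>0$: here $\psi>0$ in a punctured neighborhood of $1$, which combined with $\psi>0$ near $0^{+}$ and the three-zero ceiling forces $\psi>0$ on $(0,r_{\ast})\setminus\{1\}$ and $\psi<0$ on $(r_{\ast},\infty)$ for a unique $r_{\ast}>1$ (existence from $\psi\to -\infty$). The hypothesis $z_{\alpha,\beta}\le z_{\kn,2}$ gives $\psi(z_{\alpha,\beta})=-\bar w_{\kn,2}(z_{\alpha,\beta})\ge 0$, so $z_{\alpha,\beta}\le r_{\ast}$ and hence $\psi\ge 0$ on $[0,z_{\alpha,\beta}]$, as desired. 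To finish I will rule out $\psi''(1)\le 0$ under the hypothesis. If $\psi''(1)<0$, then $\psi<0$ near $1$ forces a simple zero $r_{1}\in(0,1)$; the three zeros are then exhausted, so $\psi<0$ on $(r_{1},\infty)\setminus\{1\}$ and in particular $\psi(z_{\alpha,\beta})<0$, contradicting $z_{\alpha,\beta}\le z_{\kn,2}$. If $\psi''(1)=0$, i.e.\ $\alpha\beta=2\kn$, then $r=1$ is a triple zero and a direct computation gives $\psi'''(1)=2\kn(\alpha+\beta-\kn-2)$; the function $\alpha\mapsto \alpha+2\kn/\alpha$ is strictly increasing on $(\sqrt{2\kn},\kn)$ with supremum $\kn+2$ (attained only in the limit $\alpha\nearrow\kn$, which is excluded by our strict inequalities), so $\alpha+\beta<\kn+2$ and $\psi'''(1)<0$; the triple-zero structure then again places $\psi<0$ on $(1,\infty)$, contradicting the hypothesis.

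The principal obstacle is the case analysis needed to rule out $\psi''(1)\le 0$ under the hypothesis; in particular the degenerate case $\alpha\beta=2\kn$ demands a third-derivative computation and a monotonicity argument along the hyperbola $\alpha\beta=2\kn$ in parameter space, and the P\'olya--Szeg\H o zero count must be applied carefully to pin down the number and location of the remaining positive zero(s) of $\psi$.
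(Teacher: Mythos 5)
Your proof is correct, and it takes a genuinely different route from the paper's. The paper works with $g = \bar{w}_{\kn,2} - \bar{w}_{\alpha,\beta} = -\psi$ and establishes the needed zero-counting by hand: it computes $g^{(5)}$ and shows it is monotone (hence has at most one sign change), uses this to constrain $g^{(3)}$ to a convex--concave or convex--concave--convex shape with at most two zeros, propagates that structure to $g'$, and then combines the known values $g(0)=g(1)=g'(0)=g'(1)=0$ with the sign of $g'$ near zero to determine the sign of $g$ on $[0,z_{\alpha,\beta}]$ by a final case split. You instead invoke the P\'olya--Szeg\H{o} generalization of Descartes' rule to cap the number of positive zeros (with multiplicity) of the four-term generalized polynomial $\psi$, whose coefficient sign pattern $(+,-,+,-)$ gives a ceiling of three; anchoring a double (or triple) zero at $r=1$ via $\psi(1)=\psi'(1)=0$ and the explicit values $\psi''(1)=\alpha\beta-2\kn$ and $\psi'''(1)=2\kn(\alpha+\beta-\kn-2)$ when $\alpha\beta=2\kn$, together with the endpoint asymptotics $\psi(0^+)>0$ and $\psi(\infty)=-\infty$, then leaves essentially no freedom in the sign of $\psi$. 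What your route buys is brevity and transparency: once the classical bound is granted, the structure of $\psi$ is forced, and the only real work is the low-order derivative computation at $r=1$ and the short case split on $\psi''(1)$. What the paper's route buys is self-containment: it never cites an external theorem, re-deriving the needed zero bound through iterated differentiation, at the cost of tracking the shape of three successive derivatives. Both arguments hinge on the same underlying zero-counting principle, so the difference is one of packaging; yours is arguably the cleaner of the two once one is willing to cite P\'olya--Szeg\H{o}.
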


\begin{proof}
One direction is trivial, as if $\bar{w}_{\kn,2}(r)\le \bar{w}_{\alpha,\beta}(r)$ for all $r\in [0,z_{\alpha,\beta}]$, then in particular $\bar{w}_{\kn,2}(z_{\alpha,\beta})\le \bar{w}_{\alpha,\beta}(z_{\alpha,\beta})=0,$ hence $z_{\alpha,\beta}\le z_{\kn,2}$. For the proof of the other direction, we begin by defining 
\[
g(r):=\bar{w}_{\kn,2}(r)-\bar{w}_{\alpha,\beta}(r)=\frac{2r^{\kn}-\kn r^2}{\kn-2}-\frac{\beta r^\alpha-\alpha r^\beta}{\alpha-\beta}.
\]
We may divine the behaviour of $g$ from its fifth derivative
\[
g^{(5)}(r)=\frac{\alpha\beta r^{\beta-5}}{\alpha-\beta}\left[-r^{\alpha-\beta}\prod_{i=1}^4(\alpha-i)+\prod_{i=1}^4(\beta-i)\right]
\]
for $r\in (0,\infty).$
Written in this form, we see that $g^{(5)}(r)$ is {the product of a positive function of $r$ and a} monotone {function of $r$,} and hence has at most one sign change. {More precisely}, $g^{(3)}(r)$ is either convex-concave {(if $\alpha<3$)}, concave-convex {(if $\beta>3$)}, or strictly convex {(if $\beta\le 3\le \alpha\ne \bt$)} on $(0,\infty).$ Moreover, we may write 
\[
g^{(3)}(r)=2\cdot \kn(\kn-1)r^{\kn-3}+\frac{\alpha\beta}{\alpha-\beta}\left[-(\alpha-1)(\alpha-2) r^{\alpha-3}+(\beta-1)(\beta-2)r^{\beta-3}\right].
\]
Here, both the highest order term $r^{\kn-3}$ and the lowest order term $r^{\beta-3}$ have positive coefficients, which implies that $g^{(3)}$ is positive outside a compact subinterval of $(0,\infty)$. This, combined with the convex/concave structure of $g^{(3)}$, implies that $g^{(3)}$ can have at most two zeros on $(0,\infty)$ and, in particular, may change signs at most twice --- from positive to negative to positive. 

This implies either $g'$ is convex-concave-convex on $(0,\infty)$ or just convex. We may assume $g'$ is convex-concave-convex as, if it is simply convex, an easier argument than what follows will yield the desired conclusion. Notice that 

$$g'(r)=\frac{\kn\cdot2}{\kn-2}(r^{\kn-1}-r)-\frac{\alpha\beta}{\alpha-\beta}(r^{\alpha-1}-r^{\beta-1})$$
is negative near zero and hence, the convex-concave-convexity implies $g'$ changes sign at most thrice on $(0,\infty).$ Note $g'(0)=g'(1)=0 = g(0)=g(1)$. Since $g'$ is negative near zero, we see that $g'$ must change from negative to positive somewhere in $(0,1)$, implying the existence of a zero of $g'$ on this interval. Hence $g'$ has at most one zero on $(1,\infty)$. But if there is no zero on $(1,\infty)$, then the shape of $g'$ and $g(1)=g'(1)=0$ implies $g' >0$ hence $g > 0$ on $(1, \infty)$, yielding $z_{\alpha,\beta}>z_{\kn, 2}$, a contradiction. Hence we deduce that, on $(1, \infty)$, $g'$ changes sign from negative to positive. With $g(1)=0$, this implies $g$ also changes sign from negative to positive on $(1,\infty)$. Now since the condition $z_{\alpha,\beta}\le z_{\kn,2}$ clearly implies $g(z_{\alpha, \beta})\le 0$, this allows us to conclude that $g\le 0$ on $[1, z_{\alpha,\beta}].$ 

It remains to show $g\le 0$ on $[0,1]$. Assume $g$ is positive somewhere in $ (0,1)$. Then $g'$ would have to change signs (at least) twice on the interval $(0,1)$, from negative to positive to negative. With $g'(1)=0$ all three zeros of $g'$ are in $(0,1]$, thus no zero on $(1,\infty)$, contradiction as before. This concludes the proof.
\end{proof}

\begin{figure}[H]
\caption{{Comparison of $\overline{w}_{4,2}$ to $\overline{w}_{3.1,2.5}$ and $\overline{w}_{3.5,2.5}$ illustrates Lemma \ref{signchange}. Note that the unique positive root $z_{\al,\bt}$ of $\overline{w}_{\al,\bt}$ is ordered so that
$z_{3.5,2.5}=1.4<z_{4,2}=\sqrt{2}<z_{3.1,2.5}=\left(\frac{3.1}{2.5}\right)^{\frac{1}{3.1-2.5}}\approx 1.431$. On one hand, the proof of the lemma 
implies that $\overline{w}_{3.1,2.5}(z_{3.1,2.5})=0<w_{4,2}(z_{3.1,2.5})$. Thus, by continuity, there exists some $\varepsilon>0$ such that $\overline{w}_{3.1,2.5}(r)<\overline{w}_{4,2}(r)$ for all $r\in (z_{3.1,2.5}-\varepsilon, z_{3.1,2.5}]\approx (0.7332, 1.415]$. On the other hand, 
the lemma implies that, since the graph of $\overline{w}_{3.5, 2.5}$ intersects the $x$-axis at $z_{3.5, 2.5}<\sqrt{2},$ the inequality $\overline{w}_{3.5, 2.5}(z_{3.5, 2.5})\ge \overline{w}_{4, 2}(z_{3.5, 2.5})$ extends to all $r\in (0, z_{3.5, 2.5}].$}}
\begin{subfigure}{.5\textwidth}
  \centering
  \includegraphics[width=.8\linewidth]{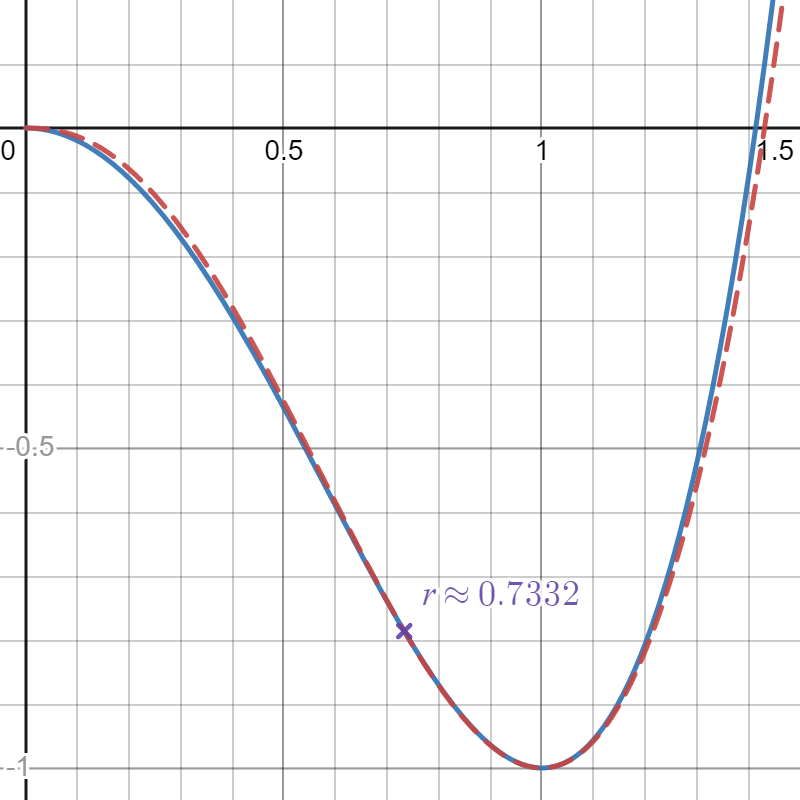}
  \caption{{Comparing $\overline{w}_{4,2}$ and $\overline{w}_{3.1,2.5}$}}
  \label{fig:sfig1}
\end{subfigure}%
\begin{subfigure}{.5\textwidth}
  \centering
  \includegraphics[width=.8\linewidth]{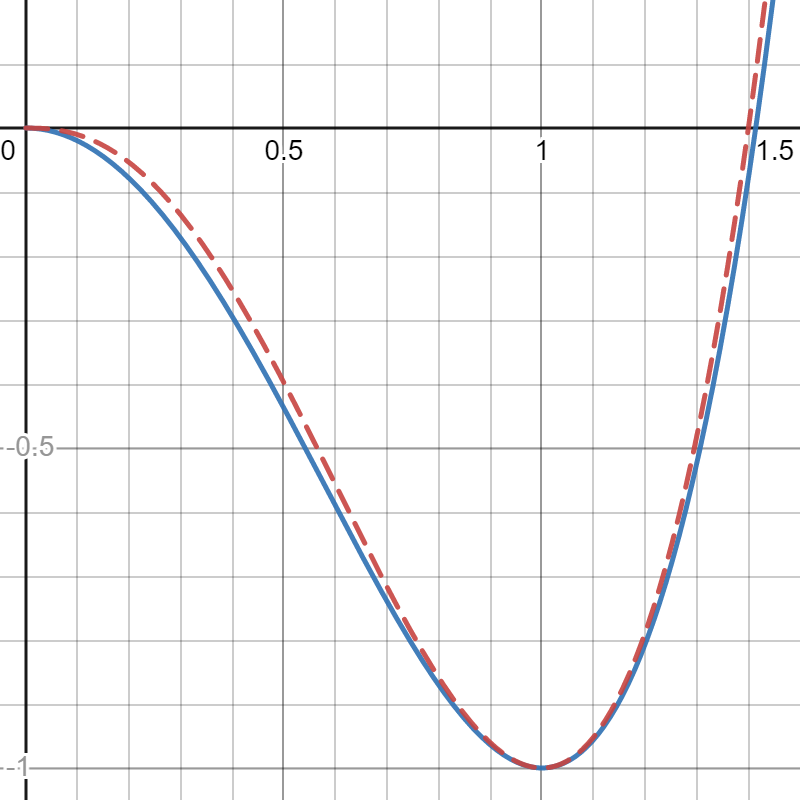}
  \caption{{Comparing $\overline{w}_{4,2}$ and $\overline{w}_{3.5,2.5}$}}
  \label{fig:sfig2}
\end{subfigure}
\label{fig:4.5}
\end{figure}

\begin{corollary}[Threshold upper bound]
\label{upperbound} If $\bt \ge 2$ then $ \al_{\Delta^n}(\bt) \le \ali(\bt)$.
\end{corollary}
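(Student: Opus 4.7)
The plan is to establish $\cP_{\De^n}\subset \argmin_{\cP(\Rn)}\cE_{W_{\al,\bt}}$ for every $\al>\ali(\bt)$ with $\bt\ge 2$, which implies $\al_{\De^n}(\bt)\le \al$ and hence the desired inequality. I would first dispatch the easy cases: when $\al\ge \kn$ and $(\al,\bt)\ne (\kn,2)$, the pair $(\al,\bt)$ lies in $A_{\kn,2}$, and Corollary \ref{C:main3} already concludes. An elementary calculation with Definition \ref{upperbounddef} (using $z_{\kn,2}=z_{\kn,2}$ as the trivial solution and $2^{\al/2}=\al$ for the nontrivial one when $n\ge 2$) gives $\ali(2)=\kn$, so the hypothesis $\al>\ali(2)=\kn$ reduces the case $\bt=2$ to the easy one. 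The nontrivial regime is therefore $2<\bt<\al<\kn$.

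Next, I would convert $\al>\ali(\bt)$ into the geometric condition $z_{\al,\bt}<z_{\kn,2}$. By Remark \ref{alternative} the defining equation for $\ali$ amounts to $z_{\al,\bt}=z_{\kn,2}$, and the unimodality of $\fin$ in Remark \ref{R:number of solutions}, combined with $\ali(\bt)$ being the largest such root, forces the strict separation $z_{\al,\bt}<z_{\kn,2}$ whenever $\al$ exceeds $\ali(\bt)$. Lemma \ref{signchange} then yields $\bar w_{\kn,2}(r)\le \bar w_{\al,\bt}(r)$ on $[0,z_{\al,\bt}]$, which extends to $[0,z_{\kn,2}]$ via the sign analysis $\bar w_{\al,\bt}\ge 0\ge \bar w_{\kn,2}$ on $(z_{\al,\bt},z_{\kn,2}]$.

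For the key comparison, I would note that any $\nu\in\cP_{\De^n}$ has pairwise distances in $\{0,1\}$, where $\bar w_{\al,\bt}$ and $\bar w_{\kn,2}$ both vanish or take the value $-1$, so $\cE_{\bar W_{\al,\bt}}(\nu)=\cE_{\bar W_{\kn,2}}(\nu)=-\tfrac{n}{2(n+1)}$. For a competitor $\mu$ whose support has diameter at most $z_{\kn,2}$, integrating the pointwise inequality together with Corollary \ref{C:main3} gives
\[
\cE_{\bar W_{\al,\bt}}(\mu)\ \ge\ \cE_{\bar W_{\kn,2}}(\mu)\ \ge\ -\tfrac{n}{2(n+1)}\ =\ \cE_{\bar W_{\al,\bt}}(\nu),
\]
identifying $\nu$ as a minimizer within this class.

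The main obstacle is extending this inequality to $\mu$ with larger support, since when $\al<\kn$ the rescaled potentials cross at some $r^\ast>z_{\kn,2}$ and the pointwise comparison reverses. I expect this is bridged by combining the uniform support bound for minimizers from Proposition 1.1 of \cite{DaviesLimMcCann21+a} (minimizers can be translated into $\overline{B_{e^{1/\bt}}(0)}$) with the monotonicity fact that $\bar w_{\al,\bt}(r)\ge 0$ for $r\ge z_{\al,\bt}$: pairs at such separations contribute non-negatively to $\cE_{\bar W_{\al,\bt}}(\mu)$, so the energy is bounded below by the contribution from close-range pairs (those with $|x-y|\le z_{\al,\bt}$), which is in turn controlled by Lemma \ref{signchange} and Corollary \ref{C:main3}. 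The delicate point in this last step is balancing the positive contribution of distant pairs against any loss incurred from restricting the $\bar w_{\kn,2}$ comparison to the close-range region.
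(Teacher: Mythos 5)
Your overall strategy matches the paper's: reduce to the interesting range $2<\bt<\al<\kn$, translate $\al\ge\ali(\bt)$ into $z_{\al,\bt}\le z_{\kn,2}$, invoke Lemma~\ref{signchange} to get $\bar w_{\kn,2}\le\bar w_{\al,\bt}$ on $[0,z_{\al,\bt}]$, and then combine with Corollary~\ref{C:main3} and the fact that $\bar w$ is constant at $r\in\{0,1\}$. Your case analysis and the sign-extension of the pointwise comparison from $[0,z_{\al,\bt}]$ to $[0,z_{\kn,2}]$ are both correct.

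The gap is precisely the one you flag at the end: you need an a priori bound $\diam(\spt\mu)\le z_{\kn,2}$ for any minimizer $\mu$ of $\cE_{W_{\al,\bt}}$, and the ingredient you reach for does not supply it. The companion paper's Proposition~1.1 only gives $\diam(\spt\mu)\le 2e^{1/\bt}$, which generically exceeds $z_{\kn,2}$, and the energy-splitting repair does not close: discarding pairs at distance $>z_{\al,\bt}$ from $\cE_{\bar W_{\al,\bt}}(\mu)$ does give a lower bound since $\bar w_{\al,\bt}\ge 0$ there, but the same truncation applied to $\cE_{\bar W_{\kn,2}}(\mu)$ removes pairs at distance $>z_{\kn,2}$, where $\bar w_{\kn,2}>0$, so the truncated $\bar W_{\kn,2}$-energy may drop \emph{below} $\cE_{\bar W_{\kn,2}}(\mu)$ and hence below $-\tfrac{n}{2(n+1)}$. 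Corollary~\ref{C:main3} no longer controls this truncated quantity, and the chain of inequalities breaks at exactly the "delicate point'' you identified.

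What the paper actually uses is \cite[Lemma~1]{KKLS21}: any minimizer of $\cE_{W_{\al,\bt}}$ has $\diam(\spt\mu)\le z_{\al,\bt}$. Since $z_{\al,\bt}$ is decreasing in $\al$ for fixed $\bt$, applying this at $\al=\ali(\bt)$ (where $z_{\al,\bt}=z_{\kn,2}$ exactly) yields the needed bound, and the comparison on $[0,z_{\kn,2}]$ then closes the argument in one line. If you instead keep $\al>\ali(\bt)$ as you propose, the same lemma gives the even better bound $\diam\le z_{\al,\bt}<z_{\kn,2}$, so the extension step becomes unnecessary. Either way, the missing ingredient is this sharper diameter estimate, not the coarse ball bound from the companion paper.
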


\begin{proof} Recall $\cP_{\Delta^n}$ minimizes $\cE_{W_{\kn,2}}$ from Corollary \ref{C:main3}.
 The fact {from Lemma \ref{signchange}, namely} $\bar{w}_{\kn, 2} (r) \le \bar{w}_{\ali, \bt}(r)$ on $r\in[0, z_{\ali, \bt}]$ with equality at $r=1$,  shows $\cP_{\Delta^n}$ minimizes $\cE_{W_{\ali, \bt}}$, since any minimizer of $\cE_{W_{\ali, \bt}}$ has its diameter no greater than $z_{\ali, \bt}$, by \cite[Lemma 1]{KKLS21}.
\end{proof}

\subsection{Threshold lower bound for each dimension}\label{SS:Lower}

We now derive a dimension dependent lower bound $\alnd$ for $\alpha_{\Delta^n}$ from the Euler-Lagrange equation (\ref{EL}) for minimizers. 

\begin{definition}[Threshold lower bound] \label{D:threshold lower bound}
Let $ \nu  \in \cP_{\Delta^n}$. For each $\bt \ge 2$, define $\alnd(\bt) \in [\bt, \infty)$ to be
\begin{align}\label{threshold lower bound}
\alnd(\bt)&:= \inf\{ \al > \bt \ | \ \spt \nu \subset \argmin_{\Rn} (W_{\al,\bt} * \nu) \}  \\
& = \sup \{ \al \in \R \ | \ \spt \nu \not\subset \argmin_{\Rn} (W_{\al,\bt} * \nu)    \}. \nn
\end{align}
\end{definition}

\begin{proposition}[Threshold lower bound]\label{P:threshold lower bound}
\label{lowerbound}
Let $ \nu  \in \cP_{\Delta^n}$.
 If $\al>\alnd(\beta)$ for some $\bt \ge 2$, then $\spt\nu =\argmin_{\R^n}(W_{\al,\bt}*\nu)$.
In particular,
\be\label{anotherdef}
\alnd(\bt)= \inf\{ \al > \bt \ | \ \spt \nu = \argmin_{\Rn} (W_{\al,\bt} * \nu) \},
\ee
and $\alnd\le \alpha_{\Delta^n}$.
\end{proposition}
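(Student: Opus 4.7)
The plan is to upgrade the weak containment $\spt\nu \subset \argmin(W_{\al,\bt}*\nu)$ defining $\alnd(\bt)$ into the equality $\spt\nu = \argmin(W_{\al,\bt}*\nu)$ whenever $\al$ lies strictly above the threshold. The main tool is the strict monotonicity of $\overline{w}_{\al,\bt}(r)$ in $\al$ furnished by Lemma~\ref{positive}, which has just enough slack---strict except at the endpoints $r\in\{0,1\}$---to collapse the minimizing set precisely onto $\spt\nu$; recall that $W_{\al,\bt}$ and $\overline{W}_{\al,\bt}$ share the same argmins since they differ by a positive rescaling (and sign).

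Fix $\al>\alnd(\bt)$, and by definition of the infimum choose $\al_0\in(\alnd(\bt),\al)$ satisfying $\spt\nu\subset\argmin(\overline{W}_{\al_0,\bt}*\nu)$; note $\al_0>\bt$ since $\alnd(\bt)\ge\bt$. Writing $\spt\nu=\{x_0,\ldots,x_n\}$ and applying Lemma~\ref{positive} termwise, I obtain the pointwise inequality $(\overline{W}_{\al,\bt}*\nu)(x)\ge(\overline{W}_{\al_0,\bt}*\nu)(x)$ on $\Rn$, with equality if and only if $|x-x_i|\in\{0,1\}$ for every $i$. Because $\overline{w}_{\al,\bt}(0)=0$ and $\overline{w}_{\al,\bt}(1)=-1$ independently of $(\al,\bt)$, both convolutions agree at every vertex and equal $-n/(n+1)$ there, so $\overline{W}_{\al,\bt}*\nu$ attains its global minimum $-n/(n+1)$ on $\spt\nu$.

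The key remaining point is to show that the equality locus $E:=\{x\in\Rn:|x-x_i|\in\{0,1\}\text{ for all }i\}$ equals $\spt\nu$. If some distance $|x-x_j|$ vanishes then $x=x_j\in\spt\nu$; otherwise every $|x-x_i|=1$, whence $\{x,x_0,\ldots,x_n\}$ would form $n+2$ mutually unit-distance points in $\Rn$, contradicting the classical simplex bound in dimension $n$. I expect this small equidistance argument to be the only real step of care, and it is where the ambient dimension enters essentially. Combining with the previous paragraph gives $\argmin(\overline{W}_{\al,\bt}*\nu)\subset E=\spt\nu$, and the reverse inclusion is immediate, establishing $\spt\nu=\argmin(W_{\al,\bt}*\nu)$.

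Finally, the two remaining claims fall out at once. The set appearing in \eqref{anotherdef} is contained in the set defining $\alnd(\bt)$ in \eqref{threshold lower bound}, so its infimum is $\ge\alnd(\bt)$; conversely, the previous step shows it contains $(\alnd(\bt),\infty)$, giving the reverse bound and hence \eqref{anotherdef}. For $\alnd\le\al_{\Delta^n}$, any $\al>\al_{\Delta^n}(\bt)$ makes $\nu$ a global minimizer of $\cE_{W_{\al,\bt}}$ by Theorem~\ref{T:threshold}, so the Euler--Lagrange equation~\eqref{EL} forces $\spt\nu\subset\argmin(W_{\al,\bt}*\nu)$; thus $\al$ lies in the defining set of $\alnd(\bt)$, yielding $\al\ge\alnd(\bt)$ and hence $\alnd(\bt)\le\al_{\Delta^n}(\bt)$.
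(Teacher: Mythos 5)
Your proposal is correct and follows essentially the same route as the paper: the first claim is proved by comparing $\overline{w}_{\al,\bt}$ with $\overline{w}_{\al_0,\bt}$ for some smaller $\al_0$ known to satisfy the Euler--Lagrange inclusion, invoking Lemma~\ref{positive} to get a strict gain away from $r\in\{0,1\}$, and identifying the equality locus with $\spt\nu$ via the fact that $\Rn$ admits at most $n+1$ pairwise unit-distance points (which the paper expresses as "$\spt\nu$ is the maximal set of points at mutual distance one" in the proof of Theorem~\ref{main2}); the identity~\eqref{anotherdef} and the bound $\alnd\le\al_{\Delta^n}$ then follow from the containment of defining sets and the Euler--Lagrange equation exactly as in the paper. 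Your version is actually a bit more careful than the paper's terse statement in that it explicitly selects $\al_0$ from the defining set (and in~\eqref{f_1rel}-style detail shows the min is $-n/(n+1)$); the one pedantic nit is that $\al_0>\bt$ follows because $\al_0$ lies in the defining set $\{\al>\bt:\dots\}$ rather than from $\alnd(\bt)\ge\bt$ (the infimum could be attained), but this has no effect on the argument.
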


\begin{proof}
For any $\al > \alnd(\bt)$, notice Lemma \ref{positive} yields $\spt \nu = \argmin_{\Rn} (W_{\al,\bt} * \nu) $, which gives \eqref{anotherdef}. The fact that 
$\alnd \le \alpha_{\Delta^n}$ is a direct consequence of the Euler-Lagrange equation satisfied by a minimizer: i.e. if $\al \ge \al_{\Delta^n}$, so that $\nu \in \cP_{\Delta^n}$ minimizes $\cE_{W_{\al,\bt}}$, then $\nu$ satisfies \eqref{EL} hence $\al \ge \alnd$.
\end{proof}

Although the value of $\alnd(\beta)$ is not very explicit,  it is possible to estimate it explicitly from below by evaluating the potential $W_{\al,\bt} * \nu$ 
at points chosen judiciously to expose potential violations of the Euler-Lagrange equation.
The resulting estimates $\aln \le \alnd$ provide weaker but explicit lower bounds for the threshold.
This requires the following family of functions and their unimodality:

{\begin{definition}[A family of unimodal functions]
Define $f_n:(0,\infty)\to\R$ by
\begin{equation}\label{f_n}
f_n(t):=\begin{cases}
\frac{2^{-1}-2^{-t}}{t} & \text{ if }n=1\\
\frac{n-(\frac{2n}{n+1})^{t/2}-n(\frac{n-1}{n+1})^{t/2}}{t} & \text{ if }n\ge 2.
\end{cases}
\end{equation}
\end{definition}

Using this family of functions, we define a new family of lower bounds:

\begin{definition}[A weaker threshold lower bound]\label{D:weaker lower bound}
For $\beta\ge 2,$ define $\aln(\beta)$ by 
\begin{equation}\label{threshold lower estimate}
\aln(\beta)=\max\{\alpha\ge 2 \ | \  f_n(\al)=f_n(\beta)\}.
\end{equation}
\end{definition}
In particular, the set over which we take the maximum in the previous definition has at most two elements, as the following lemma shows:

\begin{lemma}[Unimodality of $f_n$]\label{unimodality}
 For any $n\ge 1$, the function $f_n(t)$ is unimodal on $t \in (0,\infty)$. 
Indeed, $f_n$ admits a unique global maximum $\underline{\beta}_n:=\argmax_{t>0}f_n(t)$ and 
 no other critical points.
\end{lemma}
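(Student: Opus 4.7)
The plan is to rewrite $f_n(t) = g_n(t)/t$ where
\[
g_n(t) = \begin{cases} \tfrac12 - 2^{-t}, & n=1, \\ n - \bigl(\tfrac{2n}{n+1}\bigr)^{t/2} - n\bigl(\tfrac{n-1}{n+1}\bigr)^{t/2}, & n\ge 2, \end{cases}
\]
and reduce the critical points of $f_n$ to the zeros of the auxiliary function $h_n(t) := tg_n'(t) - g_n(t)$, since $f_n'(t) = h_n(t)/t^2$ on $(0,\infty)$. The crucial observation is that the product rule collapses
\[
h_n'(t) = g_n'(t) + tg_n''(t) - g_n'(t) = tg_n''(t),
\]
so the monotonicity of $h_n$ reduces to tracking the sign of $g_n''$ alone.

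Next I would show $g_n'' < 0$ on $(0,\infty)$ in both cases. For $n=1$, $g_1''(t) = -(\log 2)^2 \, 2^{-t} < 0$ directly. For $n\ge 2$, setting $a := \tfrac12 \log\tfrac{2n}{n+1} > 0$ and $b := \tfrac12 \log\tfrac{n-1}{n+1} < 0$ rewrites $g_n(t) = n - e^{at} - ne^{bt}$ and gives $g_n''(t) = -a^2 e^{at} - nb^2 e^{bt} < 0$. Hence $h_n$ is strictly decreasing on $(0,\infty)$. The boundary values are $h_n(0) = -g_n(0) > 0$ (equal to $1/2$ for $n=1$, and $1$ for $n\ge 2$, since $g_n(0) = n-1-n = -1$), while $h_1(t) \to -\tfrac12$ as $t\to\infty$, and for $n\ge 2$ the dominant term in $h_n(t)$ is $e^{at}(1-at)\to -\infty$. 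Strict monotonicity and the intermediate value theorem therefore yield a unique zero $\underline{\beta}_n \in (0,\infty)$, which is the unique critical point of $f_n$.

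To conclude that $\underline{\beta}_n$ is the global maximum, I would check the boundary behaviour of $f_n$ itself: in both cases $f_n(t)\to -\infty$ as $t\to 0^+$, since the numerator tends to a negative constant while the denominator tends to $0^+$; as $t\to\infty$, $f_1(t)\to 0$, while for $n\ge 2$ the numerator tends to $-\infty$ at an exponential rate, so $f_n(t)\to -\infty$. Continuity and the existence of a unique interior critical point then force $\underline{\beta}_n$ to be the global maximum. The main conceptual hurdle is spotting the identity $h_n' = tg_n''$; without it, the critical-point equation $tg_n'(t) = g_n(t)$ is a transcendental equation whose root structure is considerably less transparent, especially for $n\ge 2$ where two exponentials with opposite-sign rates compete.
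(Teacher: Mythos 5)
Your proof is correct and takes essentially the same route as the paper's: both reduce to showing that $t^2 f_n'(t)$ (your $h_n$, the paper's auxiliary $g_n$) is strictly decreasing from a positive value at $t=0$ to a negative limit, then invoke the intermediate value theorem and the resulting single sign change of $f_n'$. Your identity $h_n'(t) = t\,g_n''(t)$ is a tidy shortcut the paper does not state explicitly — the paper instead writes out $t^2 f_n'(t)$ in full and differentiates it directly — but since $t\,g_n''(t)$ \emph{is} the paper's $g_n'(t)$, the two computations coincide term by term; your version merely highlights why the product-rule cancellation makes the sign check easy. One small redundancy: once you know $f_n'$ is positive on $(0,\underline{\beta}_n)$ and negative on $(\underline{\beta}_n,\infty)$, the global maximality of $\underline{\beta}_n$ already follows; your final paragraph checking the boundary behaviour of $f_n$ itself is correct but superfluous.
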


\begin{proof}
We first treat the case $n=1$ separately. Here, notice that $f_1'(t)$ has the same sign as $g_1(t):=t^2 f_1'(t)=(t\log 2+1)2^{-t}-2^{-1}$.
Since $g_1'(t)=-t2^{-t}\log^2 2$ is always negative, and since $g_1(0)=\frac{1}{2}$ and $\lim_{t\to\infty}g_1(t)=-\frac{1}{2},$ we conclude that $f_1'$ switches sign from positive to negative at its unique zero in $(0,\infty),$ and has no other sign changes.  We denote the unique zero of $f_1'$ by $\underline{\beta}_1.$

The $n\ge 2$ case proceeds in a similar manner. Here, we notice that 
\begin{align*}
g_n(t)
:=t^2 f_n'(t)
=-\frac{t}{2} &\left[\left(\frac{2n}{n+1}\right)^{t/2}\log \frac{2n}{n+1}+n\left(\frac{n-1}{n+1}\right)^{t/2}\log \frac{n-1}{n+1}\right]
\\&-n+\left(\frac{2n}{n+1}\right)^{t/2}+n\left(\frac{n-1}{n+1}\right)^{t/2},\end{align*}
and compute 
\[
g_n'(t)=-\frac{t}{4}\left[\left(\frac{2n}{n+1}\right)^{t/2}\log^2 \frac{2n}{n+1}+n\left(\frac{n-1}{n+1}\right)^{t/2}\log^2\frac{n-1}{n+1}\right].
\]
Since $g_n'(t)$ is negative everywhere, $g_n(0)=1,$ and $\lim_{t\to\infty}g_n(t)= -\infty$, we may apply an identical argument to the one employed in the $n=1$ case to show the existence of $\underline{\beta}_n$ with all desired properties.
 \end{proof}

\begin{remark}[Diagonal intersects bound]
Notice $\aln(\beta)>\beta$ if and only if $\beta<\underline{\beta}_n$. That is, the graph of $\underline{\al}_{\De^n}$ intersects the line $\alpha=\beta$ at the point $(\underline{\beta}_n,\underline{\beta}_n)$. 
\end{remark}

\begin{proposition}[Estimating threshold lower bound]\label{lowerlowerboundprop}
For $\beta\ge 2$, the thresholds of Definitions \ref{D:weaker lower bound}, \ref{D:threshold lower bound} and
Theorem \ref{T:threshold} 
satisfy
$\aln(\beta)\le  \alnd(\beta)\le 
\al_{\Delta^n}(\beta) 
$.
\end{proposition}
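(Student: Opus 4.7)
The plan is to establish the nontrivial inequality $\aln(\bt) \le \alnd(\bt)$, since the companion bound $\alnd(\bt) \le \al_{\Delta^n}(\bt)$ is already contained in Proposition \ref{lowerbound}. My strategy is to exhibit, for every $\al \in (\bt, \aln(\bt))$, an explicit point $p \notin \spt \nu$ at which $W_{\al,\bt}*\nu$ takes a strictly smaller value than at the vertices of $\spt\nu$, thereby violating the necessary Euler--Lagrange condition \eqref{EL} and forcing $\alnd(\bt)\ge \al$ via the sup characterization in Definition \ref{D:threshold lower bound}.

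Fix $\nu \in \cP^0_{\Delta^n}$ with $\spt\nu = \{x_0,\ldots,x_n\}$ centered at the origin (the problem is translation invariant). Remark \ref{R:simplex moments} together with $|x_i-x_j|=1$ yields $|x_j|^2 = \frac{n}{2(n+1)}$ and $x_i\cdot x_j = -\frac{1}{2(n+1)}$ for $i\ne j$, so every vertex satisfies
\[
W_{\al,\bt}*\nu(x_j) = \frac{n}{n+1}\left(\frac{1}{\al}-\frac{1}{\bt}\right).
\]
The selection of $p$ is dictated by the formula \eqref{f_n} itself: for $n=1$ I take $p = \frac{1}{2}(x_0+x_1)$, so $W_{\al,\bt}*\nu(p) = w_{\al,\bt}(1/2)$; for $n\ge 2$ I take $p=-x_0$, and the inner products above give $|p-x_0|^2 = \frac{2n}{n+1}$ and $|p-x_i|^2 = \frac{n-1}{n+1}$ for $i\ne 0$, hence
\[
W_{\al,\bt}*\nu(p) = \frac{1}{n+1}\left[w_{\al,\bt}\!\left(\sqrt{\tfrac{2n}{n+1}}\right) + n\, w_{\al,\bt}\!\left(\sqrt{\tfrac{n-1}{n+1}}\right)\right].
\]
Neither probe lies in $\spt\nu$: for $n=1$ this is immediate, and for $n\ge 2$ the identity $-x_0 = x_i$ would force $x_0 \cdot x_i = -|x_0|^2 = -\frac{n}{2(n+1)}$, conflicting with the simplex value $x_0\cdot x_i = -\frac{1}{2(n+1)}$. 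A direct subtraction in either case shows $W_{\al,\bt}*\nu(p) < W_{\al,\bt}*\nu(x_0)$ if and only if $f_n(\al) > f_n(\bt)$.

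To finish, I invoke Lemma \ref{unimodality}. If $\aln(\bt)=\bt$ the claim is trivial, so assume $\aln(\bt)>\bt$; then $\bt$ and $\aln(\bt)$ are the two distinct solutions of $f_n(\al) = f_n(\bt)$, and by unimodality they must lie on opposite sides of the unique global maximizer $\underline{\beta}_n$. Consequently $f_n(\al) > f_n(\bt)$ for every $\al \in (\bt, \aln(\bt))$, the Euler--Lagrange condition fails at $p$ throughout this interval, and so $\alnd(\bt) \ge \al$ for all such $\al$; letting $\al \nearrow \aln(\bt)$ delivers the proposition.

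The main obstacle is identifying the probing point $p$: once one recognizes that $(\frac{2n}{n+1})^{t/2}$ and $(\frac{n-1}{n+1})^{t/2}$ appearing in \eqref{f_n} are precisely $|x|^t$ evaluated at the two distances from $-x_0$ to the vertices of a regular unit simplex, the remaining verification reduces to elementary algebra combined with the unimodality statement of Lemma \ref{unimodality}.
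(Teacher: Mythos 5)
Your proposal is correct and follows essentially the same approach as the paper's own proof: for each dimension you probe the potential $W_{\al,\bt}*\nu$ at the midpoint $\frac12(x_0+x_1)$ (for $n=1$) or at $-x_0$ (for $n\ge 2$), reduce the resulting Euler--Lagrange inequality to $f_n(\al)\le f_n(\bt)$, and then invoke the unimodality of $f_n$ from Lemma \ref{unimodality} together with the sup characterization of $\alnd$ in Definition \ref{D:threshold lower bound}. The only minor addition not in the paper is your explicit check that $p\notin\spt\nu$, which is harmless but unnecessary, since a strict inequality $W_{\al,\bt}*\nu(p)<W_{\al,\bt}*\nu(x_0)$ already shows $x_0\notin\argmin(W_{\al,\bt}*\nu)$, violating the Euler--Lagrange condition regardless of whether $p$ lies in the support.
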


\begin{proof}
In view of Proposition \ref{lowerbound} we need only show $\aln(\beta)\le\alnd(\bt)$.
We proceed by relating the defining equations for $\aln$ to the Euler-Lagrange equation for a unit simplex $\nu\in \cP_{\Delta^n}$. As in the introduction, we denote the vertices of the unit $n$-simplex by $\{x_0,....,x_n\}.$ We divide the proof into two cases, $n=1$ and $n\ge 2$. Notice that, in either case,  the inequality is trivial for any $\beta$ for which $\aln(\beta)=\beta$, so we are free to assume that $\aln(\beta)>\beta$.

If $n=1,$ notice that the Euler-Lagrange equation requires that 
\[
(W_{\alpha,\beta}*\nu)(x_0)\le (W_{\alpha,\beta}*\nu)(0).
\]
More explicitly, as $\nu=\frac{\delta_{x_0}+\delta_{x_1}}{2},$ this inequality reads
$
\frac{1}{2}\left[\frac{1}{\alpha}-\frac{1}{\beta}\right]\le \frac{1}{\alpha 2^\alpha}-\frac{1}{\beta 2^\beta},
$
or, 
\begin{equation}\label{f_1rel}
f_1(\alpha)=\frac{2^{-1}-2^{-\alpha}}{\alpha}\le \frac{2^{-1}-2^{-\beta}}{\beta}=f_1(\beta).
\end{equation}
By definition, $\al=\alo(\beta)$ 
saturates this inequality. Our assumption $\alo(\beta)>\beta$ with the unimodality of $f_1$ from Lemma \ref{unimodality} ensure that for any $\gamma\in (\beta,\alo(\beta)),$ 
\[
f_1(\gamma)>f_1(\beta)=f_1(\alo(\beta)).
\]
This implies that the simplex $\nu$ violates the Euler-Lagrange equation for {$\cE_{W_{\ga,\bt}}$,} and hence that $\gamma \le \alod(\beta)$. Of course, since this inequality holds for all $\gamma\in (\bt,\alo(\bt)),$ this proves that $\alo(\bt)\le \alod(\beta)$ for any $\beta\ge 2$.

Our proof proceeds analogously for $n\ge 2,$ with the key difference being that the definition \eqref{f_n} of $f_n$ is derived from the inequality
\[
(W_{\alpha,\beta}*\psi)(x_0)\le (W_{\alpha,\beta}*\psi)(-x_0),
\]
which again is a necessary condition for the Euler-Lagrange equation  to hold for $\nu$. Since the simplex geometry yields $|x_0|^2 = \frac{n}{2n+2}$ and 
$|x_0+x_1|^2=\frac{n-1}{n+1}$ (c.f.~Theorem \ref{main1} and Remark \ref{R:simplex moments}),
 this {inequality} can be re-expressed as:
\[
\frac{n}{n+1}\left(\frac{1}{\alpha}-\frac{1}{\beta}\right)\le \frac{1}{n+1}\left(\frac{\left(\frac{2n}{n+1}\right)^{\alpha/2}}{\alpha}-\frac{\left(\frac{2n}{n+1}\right)^{\beta/2}}{\beta}\right) +\frac{n}{n+1}\left(\frac{\left(\frac{n-1}{n+1}\right)^{\alpha/2}}{\alpha}-\frac{\left(\frac{n-1}{n+1}\right)^{\beta/2}}{\beta}\right),
\]
or  equivalently, 
\[
f_n(\alpha)=\frac{n-\left(\frac{2n}{n+1}\right)^{\al/2}-n\left(\frac{n-1}{n+1}\right)^{\al/2}}{\al}
\le \frac{n-\left(\frac{2n}{n+1}\right)^{\beta/2}-n\left(\frac{n-1}{n+1}\right)^{\beta/2}}{\beta}=f_n(\beta).
\]
Since $f_n$ is still unimodal for $n\ge 2$, the remainder of the proof proceeds in an identical manner to the  proof for $n=1$ following \eqref{f_1rel}, hence is omitted. 
\end{proof}
We summarize our findings for $n=2$ and $n=1$ in Figures \ref{fig:2d} and \ref{fig:1d} respectively.

\begin{figure}[H]
    \centering
    \includegraphics[scale=0.25]{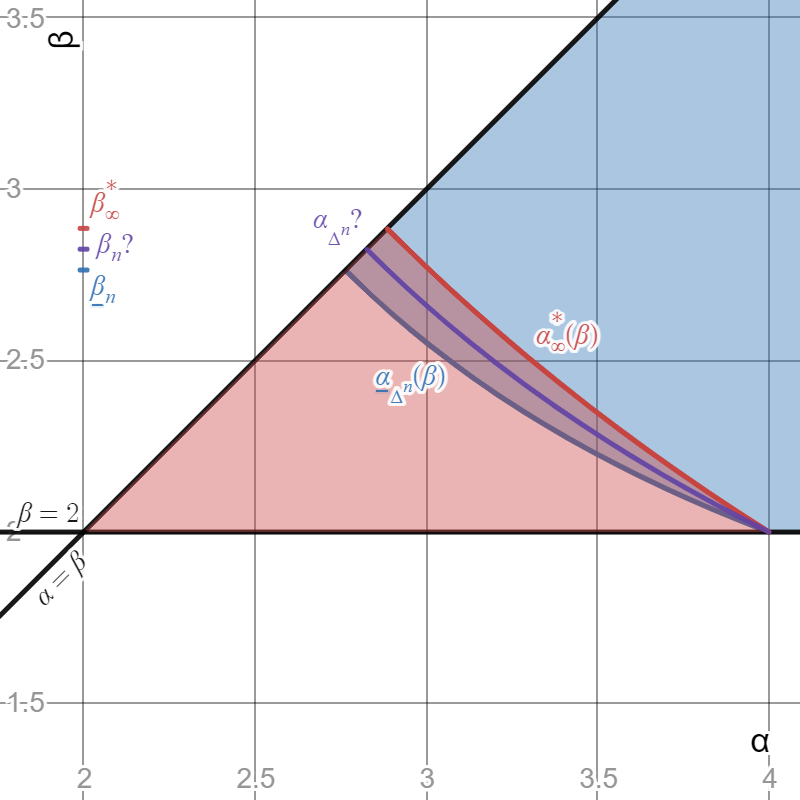}
    \caption{The mildly repulsive regime for, e.g., $n=2$. In the red region to the left of the blue curve $\al=\underline{\alpha}_{\Delta^2}(\beta),$ the simplex does not minimize $\mathcal{E}_{W_{\alpha,\beta}}.$ Conversely, in the rightmost blue region, the simplex uniquely minimizes $\mathcal{E}_{W_{\alpha,\beta}}.$ In the intermediate region, it is not entirely known where the simplex minimizes $\mathcal{E}_{W_{\alpha,\beta}}$, but the graph of the 
threshold function $\alpha_{\Delta^2}$ must lie entirely in this region.}
    \label{fig:2d}
\end{figure}

\begin{figure}[H]
    \centering
    \includegraphics[scale=0.25]{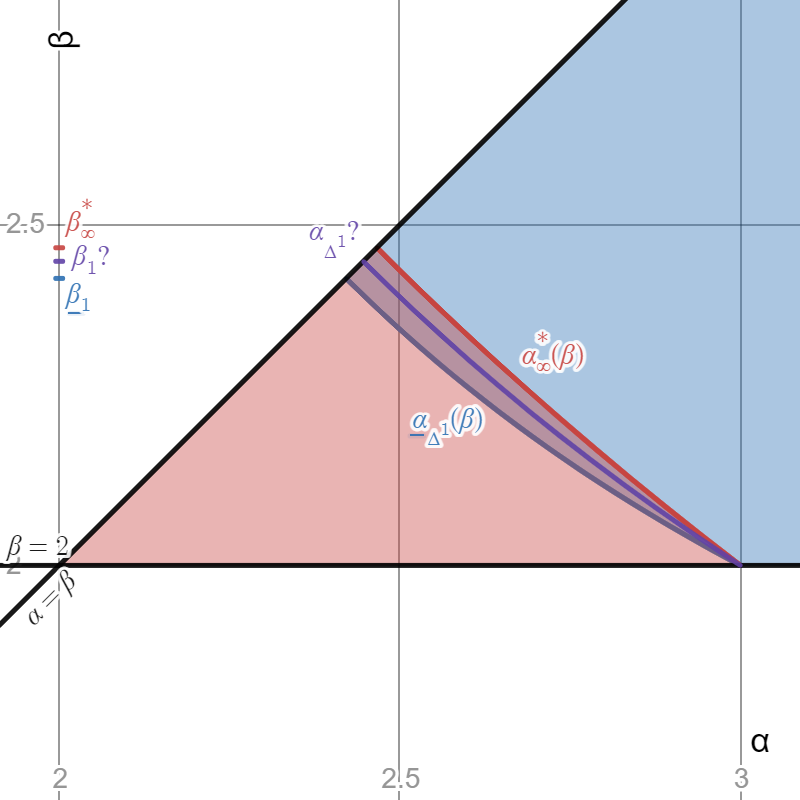}
    \caption{The analogous graph for the $n=1$ case. All coloured regions and graphs have the same meaning as their counterparts in Figure \ref{fig:2d}, although 
the scale of this graph differs from its higher dimensional counterparts, due the fact that $4^*=3$ when $n=1.$}
    \label{fig:1d}
\end{figure}

\noin Notably, even this weaker lower bound tends to the upper bound $\ali$ as $n\to\infty$. }

\begin{proposition}[Bounds converge in the high dimensional limit]\label{P:high dimensional collapse}
For all $\bt \ge 2$, we have\,
$
\lim_{n\to\infty} \aln(\beta) = \ali(\beta) \qquad (n\ne 1).
$
\end{proposition}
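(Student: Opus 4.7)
The plan is to identify the pointwise limit of $f_n$ as $n\to\infty$, to verify that it agrees with the function whose level sets determine $\ali$, and then to transfer the convergence $f_n\to f_\infty$ to convergence of the level-set maxima $\aln(\beta)\to\ali(\beta)$.

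The first step is an asymptotic expansion of the two exponential terms in \eqref{f_n}. Writing $\tfrac{2n}{n+1}=2(1-\tfrac{1}{n+1})$ yields $(\tfrac{2n}{n+1})^{t/2}=2^{t/2}+O(1/n)$, while $\log(1-\tfrac{2}{n+1})=-\tfrac{2}{n+1}+O(1/n^2)$ together with a Taylor expansion give $n(\tfrac{n-1}{n+1})^{t/2}=n\exp(-\tfrac{t}{n+1}+O(1/n^2))=n-t+O(1/n)$, all with remainders uniform on compact subsets of $(0,\infty)$. Substituting into \eqref{f_n} produces
\[
f_n(t)\longrightarrow f_\infty(t):=1-\frac{2^{t/2}}{t}
\qquad\text{uniformly on compact subsets of }(0,\infty).
\]
Since $\bti=2/\log 2$ when $n\ge 2$, one has $e^{t/\bti}=2^{t/2}$, so the defining relation \eqref{upperdefining} reads $f_\infty(\alpha)=f_\infty(\beta)$; in particular $\ali(\beta)$ is its largest solution in $\alpha$.

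I would next verify that $f_\infty$ is strictly unimodal on $(0,\infty)$ with unique maximum at $\bti$, by direct computation ($\tfrac{d}{dt}(2^{t/2}/t)=2^{t/2}(t\log 2-2)/(2t^2)$ has only one positive zero, at $\bti$), and derive $\underline{\beta}_n\to \bti$. The latter follows by a trap argument: given $\epsilon>0$, fix a compact interval $I=[\bti/K,K\bti]$ on which $f_\infty$ attains its maximum only at $\bti$ and on whose endpoints $f_\infty<f_\infty(\bti)$. Uniform convergence forces $f_n(\bti)>f_n(\bti/K),\,f_n(K\bti)$ for large $n$, and unimodality of $f_n$ (Lemma \ref{unimodality}) then confines $\underline{\beta}_n$ to $I$; a further application of uniform convergence yields $|\underline{\beta}_n-\bti|<\epsilon$ for $n$ large.

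The final step splits into $\beta\ge \bti$ and $\beta<\bti$. If $\beta\ge \bti$, Remark \ref{R:number of solutions} gives $\ali(\beta)=\beta$, while $\aln(\beta)\ge \beta$ by \eqref{threshold lower estimate} and $\aln(\beta)\le \ali(\beta)=\beta$ by Proposition \ref{lowerlowerboundprop} and Corollary \ref{upperbound}, so $\aln(\beta)=\beta=\ali(\beta)$ for every $n$. If $\beta<\bti$, then $\aln(\beta)\in[\beta,\ali(\beta)]$ is bounded, so it suffices to show every subsequential limit equals $\ali(\beta)$. Given a subsequence with $\underline{\alpha}_{\Delta^{n_k}}(\beta)\to\alpha^*$, passing the identity $f_{n_k}(\underline{\alpha}_{\Delta^{n_k}}(\beta))=f_{n_k}(\beta)$ to the limit via uniform convergence on $[\beta,\ali(\beta)]$ yields $f_\infty(\alpha^*)=f_\infty(\beta)$. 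Since $\underline{\beta}_{n_k}\to \bti>\beta$ and $\underline{\alpha}_{\Delta^{n_k}}(\beta)\ge \underline{\beta}_{n_k}$ eventually (because the larger of the two roots of $f_{n_k}(\alpha)=f_{n_k}(\beta)$ is selected once $\beta<\underline{\beta}_{n_k}$), one has $\alpha^*\ge \bti>\beta$, so unimodality of $f_\infty$ forces $\alpha^*=\ali(\beta)$.

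The main obstacle will be the uniform asymptotic expansion of $n(\tfrac{n-1}{n+1})^{t/2}$: a delicate cancellation between the $n$ prefactor and the leading correction in the exponential is required to extract the finite limit $n-t$, and the $O(1/n)$ remainder must be controlled uniformly on compact subsets of $t$. Once this is in hand, unimodality of both $f_n$ (Lemma \ref{unimodality}) and $f_\infty$, combined with the sandwich $\aln\le \ali$ from Proposition \ref{lowerlowerboundprop} and Corollary \ref{upperbound}, lets the rest of the argument proceed along standard lines.
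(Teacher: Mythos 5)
Your proof is correct and follows the same route as the paper: compute the pointwise limit of $f_n$, identify it (up to an irrelevant additive constant) with the function $\fin$ whose level sets define $\ali$, and then transfer the convergence to the level-set maxima using unimodality. The paper's own proof is terser — it evaluates $\lim_n f_n$ and simply asserts that since $\aln$ and $\ali$ are the largest solutions of the corresponding level-set equations, convergence follows — whereas you supply the missing justification (local uniformity of the convergence, the trap argument giving $\underline{\beta}_n\to\bti$, the sandwich $\beta\le\aln(\beta)\le\ali(\beta)$, and the subsequence argument selecting the right branch). One small remark: the paper's displayed limit reads $-\tfrac{2^{\bt/2}}{\bt}$, but as your expansion $n(\tfrac{n-1}{n+1})^{t/2}=n-t+O(1/n)$ shows, the correct pointwise limit is $1-\tfrac{2^{t/2}}{t}$ (as also stated in the caption of Figure~\ref{f:5}); since the two differ only by the additive constant $1$, the level-set equation and hence the conclusion are unaffected, but your version is the accurate one.
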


\begin{proof}
For $\bt \ge 2$, observe that the unimodal functions $f_n(\bt)$ of Lemma \ref{unimodality} converge to the unimodal limit $\fin(\bt)$ of Remark \ref{R:number of solutions}:
\begin{align*}
\lim_{n \to \infty} f_n(\bt) 
=\lim_{n \to \infty}\frac{n-\left(\frac{2n}{n+1}\right)^{\bt/2}-n\left(\frac{n-1}{n+1}\right)^{\bt/2}}{\bt}
=  1- \frac{2^{\bt/2}}{\bt}
= \fin(\bt) \qquad (n \ne 1).
\end{align*}
Since $\aln(\beta)$ and $\ali(\bt)$ are defined as the largest $\al$ satisfying $f_n(\al)=f_n(\bt)$ and $\fin(\al)=\fin(\bt)$ respectively, 
 it follows that $\aln(\beta) \to \ali(\bt)$ as $n\to \infty$.
\end{proof}

\begin{remark}[Monotonicity]
Numerical experiments displayed in Figure \ref{f:5} suggest $(4-t)(t-2)\aln(t)$ is a non-decreasing function of $n \ge 2$ on $t >0$; for $t \ge 2$ its large $n$ limit is established in
the previous proposition.
To confirm the observed monotonicity rigorously, it would suffice to show that unimodality of $f_{n+1}-f_n$ on $(0,\infty)$ for all $n \ge 2$. This is because, for $n\ge 2,$ $f_n(t)$ has zeroes only at $t=2$ and $t=4$, and hence, assuming unimodality, these are the only two zeroes of $f_{n+1}-f_n$. Since $\lim_{t\to\infty}\left(f_{n+1}(t)-f_n(t)\right)=-\infty$, this implies positivity of $(4-t)(t-2)(f_{n+1}(t)-f_n(t))$ away from $t \in \{2,4\}$.
\end{remark}

\begin{figure}[H]
    \centering
    \includegraphics[scale=0.12]{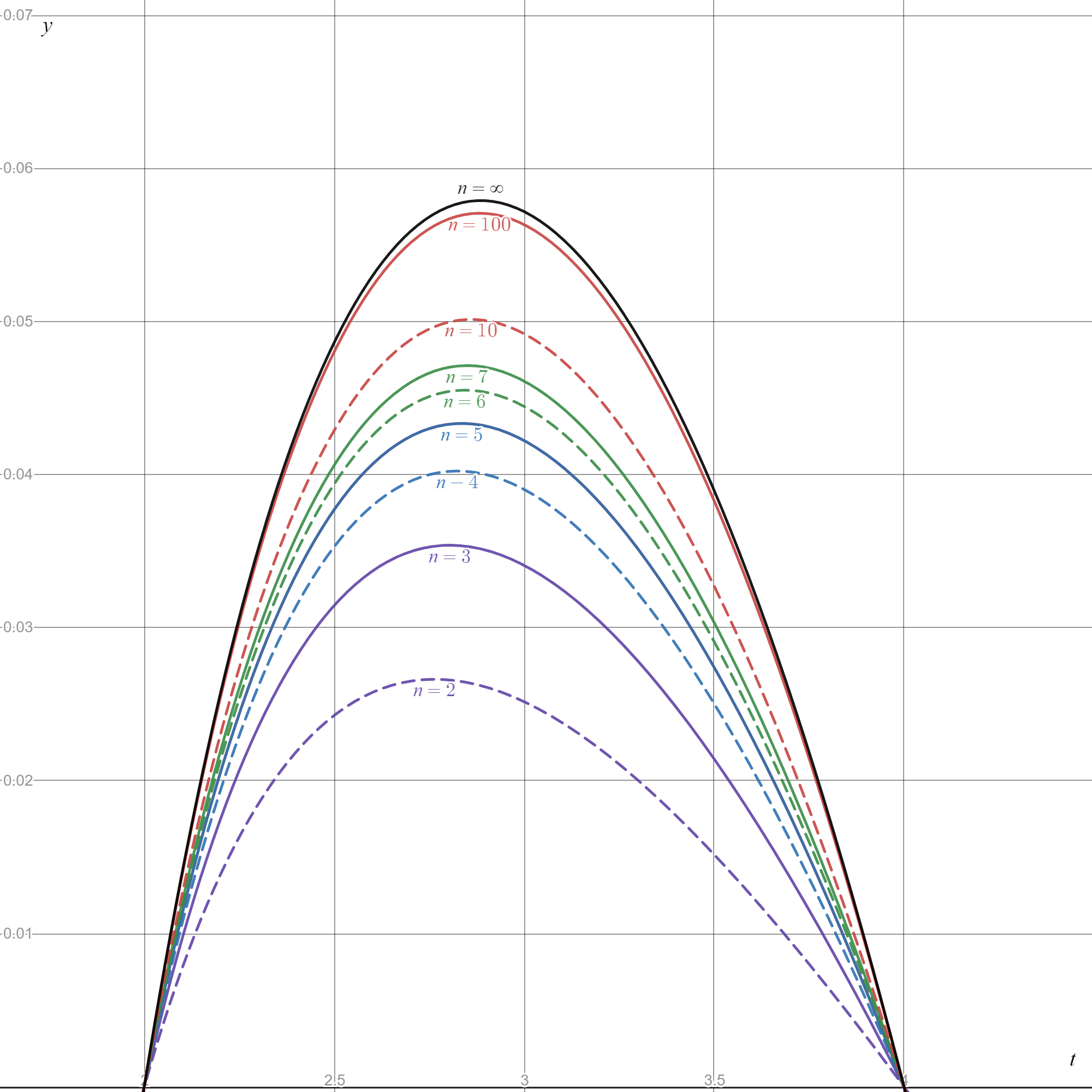}
    \caption{Graphs of $f_n(t)$ for selected values of $n.$ Our numerical experiments indicate that, for all $t\in[2,4],$ $f_n(t)$ increases monotonically to $\fin(t):=1-\frac{2^{t/2}}{t}$.}
    \label{f:5}
\end{figure}

{\em Authors' statement:}   This study does not involve any data.
The authors declare they do not have any conflict-of-interest concerning this manuscript or its contents.

\end{document}